\newtheorem{thm}{Theorem}[subsection]
\newtheorem{cor}[thm]{Corollary}
\newtheorem{prop}[thm]{Proposition}
\newtheorem{defn}[thm]{Definition}
\newtheorem{rem}[thm]{Remark}
\newtheorem{res}[thm]{Result}
\newcommand{\ra}{\rightarrow}
\newcommand{\pr}{\prime}
\newcommand{\dpr}{{\prime\prime}}
\newcommand{\ov}{\overline}
\newcommand{\mc}{\mathcal}
\newcommand{\tbf}{\textbf}
\newcommand{\lel}{\left|\left}
\newcommand{\rer}{\right|\right}
\title{Parametrization of completeness in symbolic
  abstraction of bounded input linear systems}
\author{Santosh Arvind Adimoolam}\institute{}
\begin{document}
\maketitle
\begin{abstract}
  A good state-time quantized symbolic abstraction of an
  already input quantized control system would satisfy three
  conditions: proximity, soundness and completeness.  Extant
  approaches for symbolic abstraction of unstable systems
  limit to satisfying proximity and soundness but not
  completeness.  Instability of systems is an impediment to
  constructing fully complete state-time quantized symbolic
  models for bounded and quantized input unstable systems,
  even using supervisory feedback.  Therefore, in this paper
  we come up with a way of parametrization of completeness
  of the symbolic model through the quintessential notion of
  ``Trimmed-Input Approximate Bisimulation'' which is
  introduced in the paper.  The amount of completeness is
  specified by a parameter called ``trimming'' of the set of
  input trajectories.  We subsequently discuss a procedure
  of constructing state-time quantized symbolic models which
  are near-complete in addition to being sound and proximate
  with respect to the time quantized models.
\end{abstract}
\section{Introduction}
Finite symbolic abstractions of control systems are used in
algorithmic controller
synthesis~\cite{pessoa,2012-majumdar-approximately,pgt08}.
Since digital implementations of continuous control
systems~\cite{1997-franklin-digital} have quantized and
bounded input space, we consider the setting of bounded and
\emph{quantized-input} control systems.  For such systems, a
state-time quantized abstraction restricted to a compact
region gives a finite abstraction, because the set of input
trajectories is already finite (quantized and
bounded)~\cite{2008-tabuada-approximate}.  The problem of
constructing approximately similar state-time quantized
symbolic abstraction of \emph{possibly unstable}
quantized-input control systems under stabilizability
assumptions has been solved
previously~\cite{2008-tabuada-approximate}.  On the other
hand, the problem of constructing approximately bisimilar
symbolic abstractions of bounded input \emph{unstable}
systems has not been tackled yet.  The difference between an
approximately bisimilar and an approximately similar
abstraction is in the completeness of the abstractions, as
explained in the following.  An \emph{ideal} state-time
quantized symbolic abstraction of a control system would be
exactly bisimilar to the time-quantized system model, but
such exactly bisimilar abstraction is almost impossible to
realize because of symbolic approximations resulting from
quantization of state space.  An exact bisimulation
relationship between a time quantized system model and a
state-time quantized symbolic model can be equivalently
factored into the conjunction of the following three
conditions, which we call \emph{zero deviation, soundness
  and completeness} respectively.
\begin{enumerate}
\item Zero deviation: The deviation between the output of a
  system state and the related symbolic state would
  \emph{ideally} be zero.
\item Soundness: Let $s$ be a state of a time-quantized
  model, and $s_r$ be its related symbolic state in the
  state-time quantized model. Soundness holds if whenever
  $s_r$ transitions to $s_r^\pr$ by some input, then there
  is a corresponding input by which $s$ transitions to
  $s^\pr$ which is symbolically related to $s_r^\pr$.
  The difference between soundness and an exact
    simulation~\footnote{\label{foot:milner}Exact simulation and
      bisimulation are defined in
     Girard and Pappas~\cite{metrics07}.} relation is that
    soundness does not require the outputs of related states
    to be the same, but for an exact simulation relation to
    hold, it is necessary (not sufficient) that the outputs
    of related states are same.
\item Completeness: Completeness is the converse of
  soundness and is defined as follows.  Let $s$ be a state
  of a time-quantized model, and $s_r$ be its related
  symbolic state in the state-time quantized
  model. Completeness holds if whenever $s$ transitions to
  $s^\pr$ by some input, then $s_r$ transitions to $s_r^\pr$
  by a corresponding input such that $s^\pr$ is symbolically
  related $s_r^\pr$.  Completeness also does not
    require the outputs of related states to be same, unlike
    an exact simulation$^{\ref{foot:milner}}$ relation.
\end{enumerate}
Unlike bisimulation, an exact
simulation$^{\ref{foot:milner}}$ of a state-time quantized
model by the time-quantized model only entails zero
deviation plus soundness, but not completeness.  On the
other hand, an exact bisimulation relation between a
state-time quantized model and the time-quantized model
entails completeness, soundness and zero deviation.  The
soundness condition ensures that every control law
synthesized from the symbolic model has its corresponding
control law in the time-quantized system model.  The
completeness condition ensures that all control laws present
in the time-quantized system model have corresponding
control laws in the state-time quantized symbolic model;
which means that we do not miss out any control laws of the
time-quantized system model from the symbolic model while
doing controller synthesis on symbolic model.  The zero
deviation condition ensures that there is no error in the
output of the synthesized control law from symbolic model
when compared with the actual output of the system for the
same control law.

The soundness condition is indispensable because without it
the control laws synthesized from the symbolic model would
not be correct for the actual system model.  But unlike the
soundness condition, the zero deviation and completeness
conditions are not very imperative.  In fact, satisfying the
zero deviation condition is very difficult if not impossible
because of state-quantization induced symbolic
approximations.  So in approximately similar symbolic
abstraction~\cite{pgt08,mgmp12-unstable,2012-tazaki-discrete,2008-tabuada-approximate},
the zero deviation condition is relaxed as:

\emph{Parametrized deviation}: There is a parameter
specifying an upper bound on the deviation between the
states of the system model and related states of the
symbolic model. We will call this parameter as
\emph{proximity}.

The extant methodology of approximately similar symbolic
abstraction (discussed
in~\cite{pgt08,mgmp12-unstable,2012-tazaki-discrete,2008-tabuada-approximate})
establishes soundness between symbolic model and system
model while also specifying the proximity parameter, which
is the precision bound of the approximate simulation
relation~\cite{pgt08,mgmp12-unstable,2012-tazaki-discrete,2008-tabuada-approximate}.
Parametrization of the amount of deviation (proximity)
between a system model and its symbolic model in addition to
demonstrating soundness of the model is the advantage of
approximately similar symbolic abstraction.  Also, a good
methodology of symbolic abstraction allows for adjusting the
{proximity} to a very small amount.  In this regard, the
methodologies discussed
in~\cite{pgt08,mgmp12-unstable,2012-tazaki-discrete,2008-tabuada-approximate}
permit sound symbolic abstraction with arbitrarily small
proximity.  However, a stronger method of abstraction,
discussed in~\cite{pgt08}, can construct an approximately
bisimilar (not just similar) finite symbolic model to a
time-quantized system model of a globally asymptotically
stable system, in which case the abstraction is {complete}
in addition to being sound and proximate.

Although proximity has been parametrized through the notion
of approximate
simulation~\cite{pgt08,mgmp12-unstable,2012-tazaki-discrete,2008-tabuada-approximate},
no attempt has been made until now to parametrize
completeness.  Parametrization of completeness would be
useful while abstracting bounded input unstable systems
that, in many cases, can not have fully complete (plus
sound) state-time quantized symbolic models.  A parameter
for completeness quantifies how exhaustively we can search
for control laws using the symbolic model.  Our paper is
concerned about parametrization of completeness and finding
a way of near-complete, sound, and proximate state-time
quantized abstraction of \emph{bounded and quantized input
  possibly unstable but locally asymptotically stabilizable
  linear control systems}.  We formalize near completeness,
soundness and proximity by the notion of \emph{trimmed input
  approximate bisimulation}, which is introduced in our
paper.  We employ supervisory feedback in the process of
abstraction.  Note that when the input space is bounded,
then locally stabilizable divergent linear systems are still
not globally asymptotically stabilizable (refer to
Subsection~\ref{subsec:enabling} and Appendix).  Therefore
we make the distinction between local asymptotic
stabilizability and global asymptotic stabilizability of
{bounded input} linear systems.  Before we explain our work,
we would like to motivate it by discussing some Related Work
as follows.

\subsection*{Related Work}
For globally asymptotically stable (GAS) continuous control
systems, finite approximately bisimilar symbolic automata
models with arbitrarily small approximation can be
constructed by the procedure discussed in~\cite{pgt08}.  As
such, soundness, completeness and proximity conditions are
met by the symbolic model of a GAS system constructed by the
procedure discussed in~\cite{pgt08}.  Regarding unstable
systems and also those systems that meet a stabilizing
condition~\cite{2012-tazaki-discrete,2008-tabuada-approximate}
but are possibly unstable, there has been work on
abstracting the systems into similar symbolic models
i.e. based on approximate simulation, but not on approximate
bisimulation
~\cite{mgmp12-unstable,2012-tazaki-discrete,2008-tabuada-approximate}.
In other words, these
approaches~\cite{mgmp12-unstable,2012-tazaki-discrete,2008-tabuada-approximate}
for abstracting unstable systems meet the proximity and
soundness conditions, but not the completeness condition.

Since global asymptotic stability (GAS) seems crucial for
complete symbolic abstraction of control systems, it is
tempting to use feedback to globally asymptotically
stabilize the system.  An idea of globally asymptotically
stabilizing the system for symbolic abstraction is discussed
in~\cite{2008-tabuada-approximate}.  But the approach
in~\cite{2008-tabuada-approximate} has the following
discrepancies:
\begin{itemize}
\item The symbolic abstraction procedure
  in~\cite{2008-tabuada-approximate} is concerned about
  sound and proximate abstraction, but not a complete
  abstraction, because the abstraction is based on approximate
  simulation but not on bisimulation.  No explicit attempt
  is made for complete symbolic abstraction.
\item Any linear stabilizing
  supervisory\footnote{\label{foot:super}Supervisory input
    is defined in the Appendix.} input will move out of
  bounds of a bounded input set for some values of the
  original input.  In other words, a supervisory input
  function$^{\ref{foot:super}}$ like $k(y,x,u)=u+C(y-x)$
  will translate the input set by $C(y-x)$ which means
  that the supervisory input moves out of bounds for many
  values of $u$.  If there is a state quantization of
  $\eta$, then there would be a translation of as much as
  $||C||\eta$ between the range of inputs enabled at a
  representative point and a point symbolically approximated
  to a representative point.
\item Global asymptotic stabilization through feedback may
  be possible if the input set is unbounded.  But when an
  everywhere divergent linear system works on a
  {bounded input set}, then the system can not be
  globally asymptotically stabilized (proved in Appendix of
  our paper).  So, the supervisory feedback approach
  in~\cite{2008-tabuada-approximate} can not be directly
  applied for symbolic abstraction of {bounded input}
  everywhere divergent linear systems.
\end{itemize}

\subsection*{Our approach}
Just like the notion of parametrized deviation (or
proximity), it would be beneficial to have a notion of
parametrized completeness since there is no extant method of
constructing fully complete models for bounded input
unstable systems.  Our paper is concerned about
parametrization of completeness in symbolic abstraction by
what we call trimming of input set, where the amount of
completeness is reflected in the smallness of trimming.  We
achieve this by introducing the quintessential idea of
\emph{trimmed input approximate bisimulation}.  We
subsequently discuss a methodology, employing supervisory
feedback, of symbolic abstraction by which we can construct
 {sound} models with arbitrarily small  {proximity
  and trimming}.  The nicety of the procedure of abstraction
is that sound models with arbitrarily small trimming and
proximity can be built, where the trimming is proportional
to the precision bound but \emph{independent of} the time
and state quantization parameters.
Many bounded input unstable linear systems can be locally
asymptotically stabilized.  Therefore our approach can have
significant use.  Although the motivation for this approach
is the idea discussed in~\cite{2008-tabuada-approximate},
but we overcome the drawbacks
of~\cite{2008-tabuada-approximate} mentioned earlier in the
Related work as follows.
\begin{itemize}
\item We parametrize the amount of completeness as smallness
  of trimming of input set.  In our paper, in addition to
  constructing sound and proximate symbolic models, we can
  construct near-complete symbolic models (arbitrarily small
  trimming) bounded and quantized input, possibly unstable,
  locally stabilizable linear systems.  But the completeness
  issue is ignored in~\cite{2008-tabuada-approximate}.
\item Earlier we have stated that a linear stabilizing
  supervisory input can move out of bounds of the input set
  for some values of the original input.  Therefore, we trim
  the input set by a small amount proportional to the
  precision bound while abstracting the symbolic model, such
  that the supervisory input does not go out of bounds of
  the input set. (Section~\ref{subsec:enabling}).
\item The approach in our paper can handle everywhere
  divergent linear systems with {bounded input}, provided
  the system is locally asymptotically stabilizable.  On the
  other hand, the approach
  of~\cite{2008-tabuada-approximate} insists on global
  asymptotic stabilizability.  But everywhere divergent
  linear systems with {bounded input} can not be globally
  asymptotically stabilized as proved in the Appendix of our
  paper.
\end{itemize}
\section{Analog approximation of quantized control systems}\label{sec:analog}
The motivation for our paper is similar
to~\cite{2008-tabuada-approximate} in attempting to build
state-time quantized abstractions of input-quantized control
systems under stabilizability assumptions, but in the scope
of quantized input \emph{linear} control systems.  In this
context, we note the following points about quantized
approximation of inputs.  A quantized input set is generally
an approximation obtained by rejecting noise of the range of
a set of analog input
trajectories~\cite{1958-bertram-effect,1964-slaughter-quantization}.
But if we were to include the noise in inputs, then the
range of input trajectories \emph{without quantization} is
crudely an open subset of an euclidean space.  Therefore in
this paper, instead of directly quantizing state space and
time of the input-quantized control system, we alternatively
obtain a state-time quantized symbolic abstraction of the
\emph{analog (or open input set)} approximation of the
control system, and subsequently restrict the open input set
to the actual quantized input set after the state-time
quantization.  The reason for doing this is because an
open set, which is dense, admits the notion of trimming %
introduced in our paper, which otherwise can not be defined
on discrete sets (this will be explained later in the %
paper).
The supervisory feedback employed in finite
abstraction can also be quantized
(see~\cite{2000-brockett-quantized,1990-delchamps-stabilizing,2005-fu-sector}
about feedback quantization).  Also, a relevant example is
worked out in Section~\ref{sec:example}.
\section{Notation}
\tbf{Important}: In the paper, the word ``open'' refers to
the topological notion of open sets, in the sense that all
points of an open set are interior points.  The word should
not be confused otherwise.  Also, by an open input control
system, we mean that the set of input trajectories of the
control system is topologically open, and this should not be
confused with the extant terminology on supervisory feedback
where open input refers to \emph{non-supervisory} input.

Apart from the general mathematical notations, we use the
following notations.  $\mathbb{R}^+$ refers to the set of
positive real numbers and $\mathbb{R}_{\geq 0}$ is the set
of non-negative real numbers.  $]a,b[$ denotes a left-right
open interval between real numbers $a$ and $b$.  Similarly,
$[a,b]$ is left-right closed, $[a,b[$ is left-closed
right-open interval and $]a,b]$ is right-closed left-open
interval.  We denote $\mathbb{Z}$ as the set of integers and
$\mathbb{N}$ as the set of natural numbers.  If $X$ is a
set, then $X^n=X\times_1X\times_2...\times_n X$. If $x\in
X^n$, then for any $i\in\mathbb{N}$, $x_i$ is the $i^{th}$
component of $x$.  If $\mathbb{R}^n$ is the $n$-dimensional
euclidean space, then for a state quantization parameter
$\eta>0$ we write $[\mathbb{R}^n]_{\eta}=\{x: \exists
k\in\mathbb{Z}^n.x=\eta(k_1,k_2,...,k_n)\}$.  We use the
$L^{\infty}$ norm everywhere in the paper denoted by
$||.||$.  If $X$ is a normed vector space and $I$ being a
connected interval of real line, $\tbf{u}:I\ra X$ and
$\tbf{v}:I\ra X$ are two functions on the same domain $I$,
then the distance norm between them is
$||\tbf{u}-\tbf{v}||=\sup_{t\in
  I}||\tbf{v}(t)-\tbf{u}(t)||.$
\section{Locally asymptotically stabilizable linear control
  systems and trimming of input trajectory set}
\label{sec:finite-abstraction}
A linear control system is a tuple $\Sigma=\left<A_{n\times
    n},B_{n\times m},U,\mathcal{U}\right>$ where $A_{n\times
  n}$ and $B_{n\times m}$ have all real entries, $U\subseteq
\mathbb{R}^m$ and $\mc{U}$ is a subset of all piecewise
continuous input trajectories of the form
$\tbf{u}:[0,\tau]\ra U$, where $\tau>0$ can be any positive
real number.  Additionally, we may also include piecewise
continuous trajectories until infinite time of the form
$\tbf{u}:[0,\infty[\ra U$.  We shall denote $U^{[0,\tau]}$
as the set of all piecewise continuous input trajectories
until time instant $\tau$.

If $x\in \mathbb{R}^n$, then we say that $x$ is a point in
the state space of the linear system $\Sigma$ as above.  An
\emph{absolutely continuous function} $\tbf{x}:[0,\tau]\ra
\mathbb{R}^n$ is said to be a trajectory of the linear
system if there exists $\tbf{u}\in \mc{U}\cap U^{[0,\tau]}$
such that at \emph{almost all} $t\in[0,\tau]$,
$\dot{\tbf{x}}(t)=\frac{d\tbf{x}}{dt}(t)=A\tbf{x}(t)+B\tbf{u}(t)$.
Given the initial condition $\tbf{x}(0)=x$ and an input
trajectory $\tbf{u}\in \mc{U}\cap U^{[0,\tau]}$, the state
trajectory (which is continuous) $\tbf{x}$ driven by
$\tbf{u}$ is uniquely determined.  Then we write
$\tbf{x}(x,t,\tbf{u})$ as the point in state space reached
at time instant $t$ by the trajectory $\tbf{x}$ driven by
$\tbf{u}$.

\subsection{Local asymptotic stabilizability}
A linear system $\Sigma=\left<A_{n\times n},B_{n\times
    m},U,\bigcup_{t\in\mathbb{R}^+}U^{[0,t]}\right>$, where
$U$ is open and bounded and
$\bigcup_{t\in\mathbb{R}^+}U^{[0,t]}$ is the set of all
possible piecewise continuous input trajectories until any
arbitrary time instant, is said to be locally asymptotically stabilizable
if $\forall~(x_{eq},u_{eq})\in\mathbb{R}^n\times U$ satisfying
$Ax_{eq}+Bu_{eq}=0$, there exists an open neighborhood
$Ngh_r=\{y\in\mathbb{R}^n: ||y-x_{eq}||<r\}$ and a matrix
$C_{m\times n}$ such that $\forall y\in Ngh_r$ we have
$(u+Cy)\in U$ and the linear system $\dot{y}=(A+BC)y$ is
asymptotically stable in $Ngh_r$.

{When the input set of a linear control system does not have
  boundaries, then it is well known that global
  stabilizability of linear systems is equivalent to local
  asymptotic stabilizability.  But when the input set of an
  everywhere divergent linear system is bounded, then the
  system can not be globally asymptotically stabilized, as
  proved in the Appendix.  It could still be locally
  asymptotically stabilized and we discuss an example in
  this Section.  Therefore we make the distinction between
  local asymptotic stabilizability and global
  stabilizability of {bounded input} linear systems.}
\begin{rem}\label{rem:matrix-exist}
  $\Sigma$ is locally asymptotically stabilizable if and only if there
  exists a matrix $C$ such that $A+BC$ has all eigenvalues
  with negative real part.  It is well known fact that
  asymptotic stability of a system of linear differential
  equations is equivalent to the prefix matrix of the
  system, in this case $(A+BC)$, having all negative
  eigenvalues.  Then because $U$ is open, the radius $r$ of
  neighborhood $Ngh_r$ can be chosen very small so that
  $\forall y\in Ngh_r.~(u+Cy)\in U$.
\end{rem}
We say that $\Sigma$ as above has a stabilization matrix
$C_{m\times n}$ if $(A+BC)$ has all eigenvalues with
negative real part.  For example, a linear system $$\left[\begin{array}{lcr}\dot{x}_1\\\dot{x}_2 \end{array}\right]=\left[\begin{array}{lcr}
    \mbox{0} & 1\\
    \mbox{-1} & 2\\
  \end{array}\right]\left[\begin{array}{lcr}x_1\\x_2 \end{array}\right]
+ \left[\begin{array}{lcr}\mbox{0}\\\mbox{u} \end{array}
\right]~~u\in]-5,5[$$ where $A=\left[\begin{array}{lcr}
    \mbox{0} & 1\\
    \mbox{-1} & 2\\
  \end{array}\right]$ and $B=\left[\begin{array}{lcr}\mbox{0}\\\mbox{1} \end{array}
\right]$ and $u\in]-5,5[~$ is unstable because
$A=\left[\begin{array}{lcr}
    \mbox{0} & 1\\
    \mbox{-1} & 2\\
  \end{array}\right]$ has both eigenvalues equal to $+1$.
But the system has a stabilization matrix $C=[0~-4]$
because $(A+BC)=\left[\begin{array}{lcr}
    \mbox{0} & 1\\
    \mbox{-1} & 2\\
  \end{array}\right]+\left[\begin{array}{lcr}\mbox{0}\\\mbox{1} \end{array}
\right].[0~-4]=\left[\begin{array}{lcr}
    \mbox{0} & 1\\
    \mbox{-1} & -2\\
  \end{array}\right]$ has both eigenvalues equal to $-1$
which is negative.  In fact, at the equilibrium
$(x_{eq},u_{eq})=(0,0)$ for constant input $u_{eq}=0$, take
a neighborhood $Ball_{1}(0)$ of radius $1$ around $x_{eq}=0$
and then we get that $\forall y\in
Ball_{1}(0).||u_{eq}+Cy||=||0+Cy||\leq
||C||||y||=||[0,-4]||.||y||<4\times 1$.  Hence the feedback
input $(u_{eq}+Cy)$ for $y\in Ball_{1}(0)$ will remain
within the bounded input set $]-5,5[$ for all $y\in
Ball_{1}(0)$.

However, for other values of input $u$, the feedback $u+Cy$
may move out of the bounded space $]-5,5[$.  The supervisory
function $k(y,u)=u+Cy$ translates the bounded input set by
$Cy$ and therefore moves out of the original input set for
some values of $u$.  In other words, we can not use linear
stabilizing feedback directly in symbolic abstraction when
the input set is bounded, because the stabilizing feedback
is specific to a certain constant equilibrium input in a
corresponding state space neighborhood around the state
equilibrium.  Although the example here is locally
stabilizable as shown, but it is not globally
stabilizable. The proof is in the Appendix.
\subsection{Trimming of open sets and corresponding
  trajectory space}
If $S$ is any normed vector space, then for any $s\in S$, we
write an open ball (square) of radius $\rho>0$ around any
point $s$ as ${Ball}_\rho(s)=\{s^\pr\in
S:\|s^\pr-s\|<\rho\}$.  Similarly, a closed ball (square) of
radius $\rho>0$ around any point $s$ as
$\ov{Ball}_\rho(s)=\{s^\pr\in S:\|s^\pr-s\|\leq\rho\}$.
Note that $\ov{Ball}_\rho(s)$ defines a closed square of
side length $\rho$ because $||.||$ is the $L^\infty$ norm.
We define the notion of trimming of any open subset of a
metric space as follows.
\begin{defn}
\label{defn:trim}
Let $S$ be a normed vector space.  Then define, for any
\emph{open set} $A\subset S$, $A_{-\rho}=\{s\in
A:~\ov{Ball}_\rho(s)\subseteq A\}$.  Minus $'-'$ in
subscript of $A_{-\rho}$ means trimming.
\end{defn}
Note that we used a closed ball for trimming and not an open
ball.  We then derive in Proposition~\ref{prop:open} that
the trimmed set of an open set is open.  Before that, we
discuss an example and explain the reason why we defined
trimming on only open sets.

\emph{Example of trimmed set}: Consider a two dimensional
open rectangle $rect=]2,4[\times]9,14[$. Recall that
$\ov{Ball}_{0.3}(.)$ defines a closed square of side length $0.3$
because we are considering $L^\infty$ norm.  Therefore,
after trimming the open rectangle by an amount $\rho=0.3$,
we get an open rectangle $rect_{-0.3}=]2.3,3.7[\times
]9.3,13.7[$ because all closed squares of side length $0.3$
attached to the boundary are removed.

$A_{-\rho}$ is the set obtained by trimming $A$ by a margin
of $\rho$ near the boundary, since all points near the
boundary within a margin of $\rho$ have at least one point
among their $\rho$-distant neighbors outside $A$.  Although
the definition of trimming can also extend to non-open sets,
but in a practical sense, trimming is more reasonable for
open sets.  To illustrate, consider a finite but large
subset of a normed vector space.  Then trimming of the finite set
by even an infinitesimally small margin will result in an
empty set, because all the points of the finite set are
boundary points.  To avoid this oddity, we restrict the
definition of trimming to open sets only.
The following proposition asserts that after trimming an
open set, we end up with an open set.
\begin{prop}\label{prop:open}
  If $A$ is an open subset of a Banach space $S$, then for
  any $\rho>0$, $A_{-\rho}$ is open.  
  \end{prop}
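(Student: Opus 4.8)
The plan is to show that every point of $A_{-\rho}$ is interior to it. Fix $s \in A_{-\rho}$, so by Definition~\ref{defn:trim} the closed ball $\ov{Ball}_\rho(s)$ is contained in $A$. I would first isolate the key quantitative step: there exists a margin $\epsilon>0$ such that the slightly larger open ball $Ball_{\rho+\epsilon}(s)$ is still contained in $A$. Granting this, the remainder is a routine triangle-inequality argument. For any $t\in Ball_\epsilon(s)$ and any $x\in\ov{Ball}_\rho(t)$ we have $\|x-s\|\le\|x-t\|+\|t-s\|<\rho+\epsilon$, so $x\in Ball_{\rho+\epsilon}(s)\subseteq A$; hence $\ov{Ball}_\rho(t)\subseteq A$, i.e. $t\in A_{-\rho}$. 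This gives $Ball_\epsilon(s)\subseteq A_{-\rho}$, so $s$ is interior and $A_{-\rho}$ is open.

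Everything thus reduces to producing the margin $\epsilon$. If $A=S$ there is nothing to prove, so assume $S\setminus A\ne\emptyset$. I would work with the continuous (indeed $1$-Lipschitz) function $x\mapsto\mathrm{dist}(x,S\setminus A)$, which is strictly positive exactly on $A$ since $S\setminus A$ is closed. Because $\ov{Ball}_\rho(s)$ is compact and contained in $A$, this function attains a strictly positive minimum $\delta>0$ on $\ov{Ball}_\rho(s)$. Taking $\epsilon=\delta$ then works: given $x$ with $\|x-s\|<\rho+\delta$, either $\|x-s\|\le\rho$, whence $x\in\ov{Ball}_\rho(s)\subseteq A$, or $\rho<\|x-s\|<\rho+\delta$, in which case the radial point $p=s+\rho(x-s)/\|x-s\|$ lies in $\ov{Ball}_\rho(s)$ with $\|x-p\|=\|x-s\|-\rho<\delta\le\mathrm{dist}(p,S\setminus A)$, forcing $x\notin S\setminus A$, i.e. $x\in A$. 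Hence $Ball_{\rho+\delta}(s)\subseteq A$, completing the reduction.

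The main obstacle is precisely the strict positivity of $\delta$, and this is exactly where compactness of the closed ball is used. In $\mathbb{R}^n$, where the intended applications live (the state space and the input set $U\subseteq\mathbb{R}^m$), $\ov{Ball}_\rho(s)$ is compact and the argument goes through cleanly. In a general infinite-dimensional Banach space the closed ball is not compact, a disjoint closed set can approach $\ov{Ball}_\rho(s)$ at distance zero, and one can even exhibit an open $A$ containing $\ov{Ball}_\rho(s)$ for which no enlarged ball fits; so the statement is safest read in the finite-dimensional (or locally compact) setting, or with compactness added as a hypothesis. I would therefore carry out the margin step explicitly for $\mathbb{R}^n$ via the distance-function minimum above, and flag the role of compactness as the essential structural input.
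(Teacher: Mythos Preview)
Your argument is correct and cleaner than the paper's. Both proofs reduce to the same key step: given $\ov{Ball}_\rho(s)\subseteq A$, produce $\epsilon>0$ with $Ball_{\rho+\epsilon}(s)\subseteq A$, after which the triangle inequality finishes. You obtain $\epsilon$ directly as the minimum of the continuous function $\mathrm{dist}(\cdot,S\setminus A)$ on the compact ball $\ov{Ball}_\rho(s)$. The paper instead intersects $A$ with a large open ball $Ball_w(a)$, argues that the boundary $\partial X$ of the intersection is compact, extracts a finite subcover of $\partial X$ by small balls disjoint from an enlarged ball about $a$, and then uses connectedness of balls in a Banach space to force $Ball_{\rho+r}(a)\subseteq X\subseteq A$. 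Your distance-function route is shorter and avoids the cover-and-connectedness detour, while buying exactly the same conclusion.

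Both arguments rest on compactness of a closed bounded set, and you are right to isolate this as the structural input. The paper asserts $\partial X$ is compact ``because $X$ is bounded and $S$ is Banach,'' which is valid only when closed bounded sets are compact, hence essentially in finite dimensions. Your suspicion that the general Banach statement is unsafe is in fact correct: in $c_0$ with the sup norm, the open set $A=\{x:|x_n|<1+1/n\text{ for all }n\}$ contains $\ov{Ball}_1(0)$, yet for any $\epsilon>0$ the point $(\epsilon/2)e_n$ lies in $Ball_\epsilon(0)\setminus A_{-1}$ once $n>2/\epsilon$, so $0$ is not interior to $A_{-1}$. Restricting to $\mathbb{R}^n$, as you do and as all the paper's applications ($U\subseteq\mathbb{R}^m$, state space $\mathbb{R}^n$) require, is therefore the right call.
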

  Note that the above Proposition~\ref{prop:open} does not
  hold if in the Definition~\ref{defn:trim} of trimmed set ,
  the closed ball used for trimming is replaced by an open
  ball.
\begin{proof}
  Take any point $a\in A_{-\rho}$.  This means
  $\ov{Ball}_{\rho}(a)\subset A$ by the definition of
  trimming.  For $w>\rho$, define $X=Ball_w(a)\cap A$.  As
  $w>a$, so $\ov{Ball}_\rho(a)\subset Ball_w(a)$.  Also we
  have $\ov{Ball}_\rho(a)\subset A$.  Therefore,
  $\ov{Ball}_\rho(a)\subset X=Ball_w(a)\cap A$.
  $\ov{Ball}_\rho(a)$ is a strict subset of $X$ because
  $\ov{Ball}_\rho(a)$ is closed while $X$ is open.  Let
  $\partial X$ be the boundary of $X$, which is compact
  because $X$ is bounded and $S$ is Banach.  As
  $\ov{Ball}_\rho(a)\subset X$ and $X\cap \partial X=\{\}$-
  $X$ being open, so $\partial X\cap\ov{Ball}_\rho(a)\subset
  X\cap \partial X=\{\}$.  Since $\partial X\cap
  Ball_\rho(a)=\{\}$, so for every $x\in\partial X$, we have
  $||a-x||>\rho$ and therefore we can choose
  $r_x:0<r_x<\left(||a-x||-\rho\right)$ and $\delta_x:0
  <\delta_{x}<\left(||a-x||-\rho-r_x\right)$.  By reverse
  triangular inequality, if $y\in Ball_{\delta_x}(x)$, then
  $||y-a||>||a-x||-||y-x||>||a-x||-\delta_x$.  Since
  $\delta_x$ is chosen such that
  $\delta_{x}<\left(||a-x||-\rho-r_x\right)$, so by
  substituting we get $||y-a||>r_x+\rho$.  Therefore all
  points $y\in Ball_{\delta_x}$ are at a distance of greater
  than $\rho$ from $a$ and so $Ball_{\delta_x}\cap
  Ball_{\rho+r_x}(a)=\{\}$.  Consider the open cover of
  $\partial X$ as $Cov=\{Ball_{\delta_x}(x):x\in\partial
  X\}$.  Because $\partial X$ is compact, so there exists a
  finite sub-cover $FinCov\subset Cov$ covering $\partial X$.
  Index sets in $FinCov$ as
  $FinCov=\{Ball_{\delta_{x1}}(x1),Ball_{\delta_{x2}}(x2),...,Ball_{\delta_{xk}}(xk)\}$
  for some $k\in\mathbb{N}$.  Let $r=\min_{1\leq i\leq
    k}r_{xi}$ where $r_{x}$ is chosen for any $x\in\partial
  X$ as described earlier.  We earlier showed that $\forall
  x\in\partial X.Ball_{\delta_x}\cap
  Ball_{\rho+r_x}(a)=\{\}$ which means $Ball_{r+\rho}(a)$ is
  disjoint from each of the sets in $FinCov$ which covers
  $\partial X$ since $r=\min_{1\leq i\leq
    k}r_{xi}$.  Therefore $Ball_{r+\rho}(a)$ is disjoint
  from $\partial X$.

  We now show that $Ball_{r+\rho}(a)\subseteq X$.  $X$ is open
  and also $[complement(X)/\partial X]$ is open by removing
  the boundary from $complement(X)$.  We have
  $Ball_{r+\rho}(a)=\left(Ball_{r+\rho}(a)\cap
    X\right)\cup\left(Ball_{r+\rho}(a)\cap
    [complement(X)/\partial
    X]\right)\cup\left(Ball_{r+\rho}(a)\cap\partial
    X\right)$.  But earlier we proved
  $Ball_{r+\rho}(a)\cap\partial X=\{\}$ by which we
  get $$Ball_{r+\rho}(a)=\left(Ball_{r+\rho}(a)\cap
    X\right)\cup\left(Ball_{r+\rho}(a)\cap
    [complement(X)/\partial X]\right)$$
  $\left(Ball_{r+\rho}(a)\cap X\right)$ and
  $\left(Ball_{r+\rho}(a)\cap [complement(X)/\partial
    X]\right)$ are both open.  $S$ being a Banach space, all
  balls in the space are connected and so $Ball_{r+\rho}(a)$
  is connected and can not be written as the disjoint union
  of two open sets.  Therefore either
  $\left(Ball_{r+\rho}(a)\cap X\right)$ or
  $\left(Ball_{r+\rho}(a)\cap [complement(X)/\partial
    X]\right)$ is empty.  Also earlier we proved
  $Ball_{r+\rho}(a)\cap\partial X=\{\}$.  As
  $\ov{Ball}_\rho(a)\subset Ball_{\rho+r}(a)$ and
  $\ov{Ball}_\rho(a)\subset X$, so $Ball_{\rho+r}(a)\cap X$
  is non-empty.  This means the other open set in disjoint
  union $Ball_{r+\rho}(a)\cap[complement(X)/\partial
  X]=\{\}$ (empty).  So, $\left(Ball_{r+\rho}(a)\cap
    X\right)=Ball_{r+\rho}(X)$ or equivalently
  $Ball_{r+\rho}(a)\subseteq X$.

  Consider any $p\in Ball_r(a)$.  Then, $\forall q\in
  \ov{Ball}_\rho(p)$, we have by triangular inequality
  $||q-a||\leq ||p-a||+||p-q||<r+\rho$ substituting
  $||p-a||<r$ while $||p-q||\leq \rho$.  So, $q\in
  Ball_{\rho+r}(a)$.  But, $Ball_{\rho+r}(a)\subseteq
  X\subseteq A$ implies $q\in A$.  Therefore there exists an
  open neighborhood as $Ball_r(a)$ around $a$ such that
  $\forall p\in Ball_r(a)$, $Ball_\rho(p)\subseteq A$ or
  equivalently $Ball_r(a)\subseteq A_{-\rho}$ by the
  definition of trimming.  Without loss of generality, for
  any $a\in A_{-\rho}$, we can find a corresponding $r>0$
  such that $Ball_r(a)\subseteq A_{-\rho}$.  Therefore
  $A_{-\rho}$ is open.
  \end{proof}
  We can define trimming on an open set of all piecewise
  continuous input trajectories in either of the following
  two ways 1) Trim the co-domain of the trajectories and
  then define piecewise continuous input trajectories on the
  trimmed co-domain.  2) Trim the actual set of input
  trajectories.  The Proposition~\ref{prop:trimming} asserts
  that both the above ways of trimming result in the same
  set.  For example, we know that
  $\left(]0,3[\right)_{-0.1}=]0.1,2.9[$.  Then
  $\left((]0,3[)^{[0,1]}\right)_{-0.1}=(]0.1,2.9[)^{[0,1]}$
  where the exponent $[0,1]$ is the time interval for the
  trajectories and subscript $-0.1$ is the amount of
  trimming (minus '-' denotes trimming).
\begin{prop}\label{prop:trimming} 
  Let $U$ be open subset of a normed vector space $S$.  It is
  easy to see that $\forall\tau>0$, $U^{[0,\tau]}$ is also
  open subset of $S^{[0,\tau]}$.  Then $\forall\rho>0$ we have
  $\left(U^{[0,\tau]}\right)_{-\rho}=\left(U_{-\rho}\right)^{[0,\tau]}$.
\end{prop}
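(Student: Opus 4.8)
The plan is to prove the set equality by establishing the two inclusions separately, after first unwinding the two kinds of trimming. First I would record what each side means. By Definition~\ref{defn:trim} applied in the trajectory space $S^{[0,\tau]}$ under the sup-norm $\|\tbf{u}-\tbf{v}\|=\sup_{t\in[0,\tau]}\|\tbf{u}(t)-\tbf{v}(t)\|$, a trajectory $\tbf{u}$ lies in $\left(U^{[0,\tau]}\right)_{-\rho}$ exactly when $\ov{Ball}_\rho(\tbf{u})\subseteq U^{[0,\tau]}$, i.e. every piecewise continuous $\tbf{v}$ with $\sup_t\|\tbf{v}(t)-\tbf{u}(t)\|\le\rho$ maps $[0,\tau]$ into $U$. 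On the other side, $\tbf{u}\in\left(U_{-\rho}\right)^{[0,\tau]}$ means $\tbf{u}(t)\in U_{-\rho}$ for every $t$, i.e. $\ov{Ball}_\rho(\tbf{u}(t))\subseteq U$ for every $t$, with balls now taken in $S$. The single structural fact driving everything is that the closed sup-ball decomposes pointwise: $\sup_t\|\tbf{v}(t)-\tbf{u}(t)\|\le\rho$ is equivalent to $\|\tbf{v}(t)-\tbf{u}(t)\|\le\rho$ for all $t$. Since $U$ is open, $\left(U^{[0,\tau]}\right)_{-\rho}$ is well defined, the ambient openness of $U^{[0,\tau]}$ being granted in the statement.

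For the easy inclusion $\left(U_{-\rho}\right)^{[0,\tau]}\subseteq\left(U^{[0,\tau]}\right)_{-\rho}$, I would take $\tbf{u}$ with $\tbf{u}(t)\in U_{-\rho}$ for all $t$, and an arbitrary $\tbf{v}$ with $\sup_t\|\tbf{v}(t)-\tbf{u}(t)\|\le\rho$. For each fixed $t$ this gives $\|\tbf{v}(t)-\tbf{u}(t)\|\le\rho$, so $\tbf{v}(t)\in\ov{Ball}_\rho(\tbf{u}(t))\subseteq U$ because $\tbf{u}(t)\in U_{-\rho}$. Hence $\tbf{v}$ is a $U$-valued piecewise continuous trajectory, so $\tbf{v}\in U^{[0,\tau]}$; as $\tbf{v}$ was arbitrary, $\ov{Ball}_\rho(\tbf{u})\subseteq U^{[0,\tau]}$, giving $\tbf{u}\in\left(U^{[0,\tau]}\right)_{-\rho}$. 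This direction is purely a pointwise unpacking and needs no construction.

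The forward inclusion $\left(U^{[0,\tau]}\right)_{-\rho}\subseteq\left(U_{-\rho}\right)^{[0,\tau]}$ is where the work lies. I would take $\tbf{u}\in\left(U^{[0,\tau]}\right)_{-\rho}$, fix any $t_0\in[0,\tau]$ and any $y\in\ov{Ball}_\rho(\tbf{u}(t_0))$, and certify $y\in U$ by exhibiting a single perturbed trajectory that witnesses $y$ at time $t_0$: let $\tbf{v}$ agree with $\tbf{u}$ everywhere except $\tbf{v}(t_0)=y$. The key points to verify are that $\tbf{v}$ is still piecewise continuous, since it changes $\tbf{u}$ at one point only and the one-sided limits are unaffected, and that $\sup_t\|\tbf{v}(t)-\tbf{u}(t)\|=\|y-\tbf{u}(t_0)\|\le\rho$, so $\tbf{v}\in\ov{Ball}_\rho(\tbf{u})\subseteq U^{[0,\tau]}$. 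Then $y=\tbf{v}(t_0)\in U$, and letting $y$ range over $\ov{Ball}_\rho(\tbf{u}(t_0))$ yields $\tbf{u}(t_0)\in U_{-\rho}$; since $t_0$ was arbitrary, $\tbf{u}\in\left(U_{-\rho}\right)^{[0,\tau]}$. Combining the two inclusions gives the claimed equality.

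The hard part will be exactly the legitimacy of this single-point perturbation in the forward inclusion: one must be careful that $\tbf{v}$ remains an admissible element of the ambient piecewise-continuous trajectory space and that its sup-distance to $\tbf{u}$ is attained and equals $\|y-\tbf{u}(t_0)\|$, so that even a boundary point $y$ with $\|y-\tbf{u}(t_0)\|=\rho$ still lands inside the \emph{closed} ball $\ov{Ball}_\rho(\tbf{u})$. This is precisely the place where using closed balls in Definition~\ref{defn:trim} is essential and where the two trimmings line up: with an open ball, the witness $\tbf{v}$ for a boundary point $y$ would require $\sup_t\|\tbf{v}(t)-\tbf{u}(t)\|<\rho$, which is impossible, and the equality would fail, consistent with the remark following Proposition~\ref{prop:open}.
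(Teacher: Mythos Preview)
Your proposal is correct and follows essentially the same approach as the paper: both prove the two inclusions separately, use the single-point perturbation $\tbf{v}(t_0)=y$, $\tbf{v}(t)=\tbf{u}(t)$ otherwise, for the direction $\left(U^{[0,\tau]}\right)_{-\rho}\subseteq\left(U_{-\rho}\right)^{[0,\tau]}$, and a direct pointwise unpacking for the reverse inclusion. You present the inclusions in the opposite order and are a bit more careful with the closed-ball inequalities, but the argument is the same.
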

\begin{proof}
  Recall that we defined $S^{[0,\tau]}$ to be the set of all
  piecewise continuous input trajectories of the form
  $\tbf{u}:[0,\tau]\ra S$.  We leave it to the reader to
  verify that given $U$ is open in $S$, we have
  $U^{[0,\tau]}$ as also open in $S^{[0,\tau]}$.  We prove
  the main part of the proposition as follows.

  First we prove
 $\left(U^{[0,\tau]}\right)_{-\rho}\subseteq\left(U_{-\rho}\right)^{[0,\tau]}$.
  Let $\tbf{u}\in \left(U^{[0,\tau]}\right)_{-\rho}$.  Then
  we have to prove that $\forall t\in[0,\tau].\tbf{u}(t)\in
  U_{-\rho}$.  For any $t^\pr\in[0,\tau]$ and for any
  $v:||\tbf{u}(t^\pr)-v||<\rho$ define $\tbf{v}:[0,\tau]\ra
  U$ as $$\left|\begin{split}& \tbf{v}(t^\pr)=v\\
      & \tbf{v}(t)=\tbf{u}(t)~~\text{if}~t\neq t^\pr
\end{split}\right.$$  Since $\tbf{u}$ and $\tbf{v}$
differ at only one time point $t^\pr$ so
$||\tbf{u}-\tbf{v}||=||\tbf{u}(t^\pr)-\tbf{v}(t^\pr)||<\rho$.
This means $\tbf{v}\in \ov{Ball}_\rho(\tbf{u},U^{[0,\tau]})$.  As
$\tbf{u}\in\left(U^{[0,\tau]}\right)_{-\rho}$, so
$\tbf{v}\in U^{[0,\tau]}$.  This means $v=\tbf{v}(t^\pr)\in
U$.  But $v$ is any point inside $\ov{Ball}_\rho(\tbf{u}(t),U)$.
So, $\forall v\in \ov{Ball}_\rho(\tbf{u}(t),U)$ we get $v\in U$.
So, $\tbf{u}(t)\in U_{-\rho}$.  This is true for all
$t\in[0,\tau]$.  Therefore $\tbf{u}\in
[U_{-\rho}]^{[0,\tau]}$.  This proves
\begin{equation}\label{eqn:forward}
\left(U^{[0,\tau]}\right)_{-\rho}\subseteq\left(U_{-\rho}\right)^{[0,\tau]}. 
\end{equation}
We shall now prove the converse
$\left(U_{-\rho}\right)^{[0,\tau]}\subseteq
\left(U^{[0,\tau]}\right)_{-\rho}$.  Let $\tbf{u}\in
\left(U_{-\rho}\right)^{[0,\tau]}$.  Take any
$\tbf{v}:||\tbf{u}-\tbf{v}||<\rho$.  Then $\forall
t\in[0,\tau].\tbf{v}(t)\in \ov{Ball}_\rho(\tbf{u}(t),U)$.
$\tbf{u}\in \left(U_{-\rho}\right)^{[0,\tau]}$ means $\forall
t\in[0,\tau]\tbf{u}(t)\in U_{-\rho}$.  So, $\forall
t\in[0,\tau]\tbf{u}(t)\in U_{-\rho}$ and $\forall
t\in[0,\tau].\tbf{v}(t)\in \ov{Ball}_\rho(\tbf{u}(t),U)$ means
$\forall t\in[0,\tau].\tbf{v}(t)\in U$ or equivalently
$\tbf{v}\in U^{[0,\tau]}$.  So, $\forall\tbf{v}\in
Ball_{-\rho}(\tbf{u},U^{[0,\tau]}).\tbf{v}\in U^{[0,\tau]}$.
Therefore $\tbf{u}\in\left(U^{[0,\tau]}\right)_{-\rho}$.
This means that 
\begin{equation}\label{eqn:backward}\left(U_{-\rho}\right)^{[0,\tau]}\subseteq
  \left(U^{[0,\tau]}\right)_{-\rho}.\end{equation}

From (\ref{eqn:forward}) and (\ref{eqn:backward}) we get
  that $\left(U^{[0,\tau]}\right)_{-\rho}=\left(U_{-\rho}\right)^{[0,\tau]}$.
  \end{proof}
 \subsection{Enabling of asymptotically stabilizing
  supervisory inputs}\label{subsec:enabling}
If $\tbf{x}:[0,\infty[\ra X$ is a
trajectory of the linear system $\Sigma$ and $C_{m\times n}$
is a real matrix, then write
$\tbf{y}_{C,\tbf{x}}:[0,\infty[\ra X$ satisfying
$\tbf{y}_{C,\tbf{x}}(0)=y$ and
\begin{equation}\label{eqn:rate-difference}
\left(\dot{\tbf{y}}_{C,\tbf{x}}-\dot{\tbf{x}}\right)(t)=(A+BC)(\tbf{y}_{C,\tbf{x}}(t)-\tbf{x}(t)).
\end{equation}
Notice that if $\tbf{x}$ is driven by an input trajectory
$\tbf{u}$, then $\tbf{y}_{C,\tbf{x}}$ is driven by a
supervisory input trajectory in (\ref{eqn:supervisory}) but
\emph{only until} any time $\tau$ such that
$\tbf{u}_{y,x}([0,\tau])\subseteq U$ because $U$ is bounded;
in other words the image of $[0,\tau]$ by the supervisory
input trajectory $\tbf{u}_{y,x}$ has to be inside $U$ where
$\tbf{u}_{y,x}$ is defined as follows.
\begin{equation}\label{eqn:supervisory}
\tbf{u}_{y,x}=\tbf{u}+C.\left(\tbf{y}_{C,\tbf{x}}-\tbf{x}\right)
\end{equation}
For certain values of $\tbf{u}(t)$, $\tbf{u}_{y,x}(t)$ may move
out of the bounded input set, because $\tbf{u}_{y,x}(t)$ is
the translation of $\tbf{u}(t)$ by an amount
$C.\left(\tbf{y}_{C,\tbf{x}}-\tbf{x}\right)(t)$.  We say that
$\tbf{u}$ admits $\tbf{u}_{y,x}$ of (\ref{eqn:supervisory})
\emph{until time} $\tau$ at point $y$ with reference to $x$
if $\tbf{u}_{y,x}|_{[0,\tau]}\in U^{[0,\tau]}$ or
equivalently $\tbf{u}_{y,x}([0,\tau])\subseteq U$.  This is
said because $\tbf{u}_{y,x}$ may move out of the bounded
input set $U$ at some time instant greater than $\tau$.

\begin{defn} We denote $In_C(y,x,\tau)=\{\tbf{u}\in
  U^{[0,\tau]}:~\tbf{u}_{y,x}([0,\tau])\subseteq U\}$ which
  means that $In_C(y,x,\tau)$ contains all input
  trajectories that admit corresponding supervisory input
  trajectories of the form (\ref{eqn:supervisory}) until
  time $\tau$ at point $y$ taking $x$ as the reference.
\end{defn}
\begin{thm}\label{thm:enabled}
  Let $\Sigma=\left<A_{n\times n},B_{n\times
      m},U,\bigcup_{\tau\in\mathbb{R}^+}U^{[0,\tau]}\right>$
  be an open input ($U$ is open set) locally asymptotically stabilizable
  linear system with a stabilization matrix $C_{m\times n}$.
  Let two points $y~\text{and}~x$ in state space be such
  that for some $\epsilon>0$, $||y-x||\leq\epsilon$.  Then
  the following hold
\begin{enumerate}
\item
  $\forall\delta,\tau>0$,
  $\left(U_{-(||C||\epsilon+\delta)}\right)^{[0,\tau]}\subseteq
  In_C(y,x,\tau)$. 
\item $\left(U_{-||C||\epsilon}\right)^{[0,\tau]}\subseteq
  In_C(y,x,\tau)$.  Equivalently
  $\left(U^{[0,\tau]}\right)_{-||C||\epsilon}\subseteq
  In_C(y,x,\tau)$ because
  $\left(U^{[0,\tau]}\right)_{-||C||\epsilon}=\left(U_{-|C||\epsilon}\right)^{[0,\tau]}$
  by Proposition~\ref{prop:trimming}.
\end{enumerate}
\end{thm}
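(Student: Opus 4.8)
The plan is to reduce everything to a single estimate on the error trajectory $\tbf{e}(t)=\tbf{y}_{C,\tbf{x}}(t)-\tbf{x}(t)$. By the defining relation (\ref{eqn:rate-difference}), $\tbf{e}$ solves the autonomous linear ODE $\dot{\tbf{e}}(t)=(A+BC)\tbf{e}(t)$ with $\tbf{e}(0)=y-x$, so $\tbf{e}(t)=\exp\!\big((A+BC)t\big)(y-x)$. Since $C$ is a stabilization matrix, $A+BC$ has all eigenvalues with negative real part, and I would use this asymptotic stability to establish the key bound $\|\tbf{e}(t)\|\le\|y-x\|\le\epsilon$ for every $t\ge0$. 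Multiplying by $C$ and using submultiplicativity of the induced operator norm then yields the central pointwise estimate $\|C\,\tbf{e}(t)\|\le\|C\|\,\epsilon$ for all $t$.

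Granting this estimate, both parts follow by unwinding the definition of $In_C$. Recall $\tbf{u}_{y,x}(t)=\tbf{u}(t)+C\tbf{e}(t)$ from (\ref{eqn:supervisory}), and that $\tbf{u}\in In_C(y,x,\tau)$ exactly when $\tbf{u}_{y,x}(t)\in U$ for all $t\in[0,\tau]$. For part (1), take $\tbf{u}\in\big(U_{-(\|C\|\epsilon+\delta)}\big)^{[0,\tau]}$, so that $\tbf{u}(t)\in U_{-(\|C\|\epsilon+\delta)}$ for each $t$; by Definition~\ref{defn:trim} this means $\ov{Ball}_{\|C\|\epsilon+\delta}(\tbf{u}(t))\subseteq U$. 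Since $\|C\tbf{e}(t)\|\le\|C\|\epsilon<\|C\|\epsilon+\delta$, the translated point $\tbf{u}(t)+C\tbf{e}(t)$ lies in that closed ball, hence in $U$. As $t$ and $\delta,\tau$ were arbitrary, $\tbf{u}_{y,x}([0,\tau])\subseteq U$, i.e. $\tbf{u}\in In_C(y,x,\tau)$.

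Part (2) is the borderline case, and this is precisely where the use of a \emph{closed} ball in the definition of trimming matters. For $\tbf{u}\in\big(U_{-\|C\|\epsilon}\big)^{[0,\tau]}$ we have $\ov{Ball}_{\|C\|\epsilon}(\tbf{u}(t))\subseteq U$ for each $t$, and now the non-strict bound $\|C\tbf{e}(t)\|\le\|C\|\epsilon$ still places $\tbf{u}(t)+C\tbf{e}(t)$ inside this closed ball, hence in $U$; closedness is essential because the displacement may have norm exactly $\|C\|\epsilon$. The stated equivalence $\big(U_{-\|C\|\epsilon}\big)^{[0,\tau]}=\big(U^{[0,\tau]}\big)_{-\|C\|\epsilon}$ is then just Proposition~\ref{prop:trimming}, so the two formulations of part (2) coincide.

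The main obstacle is the estimate $\|\tbf{e}(t)\|\le\epsilon$ for all $t\ge0$. The only hypothesis available is that $A+BC$ is Hurwitz, which forces $\tbf{e}(t)\to0$ but does not by itself prevent the error norm from temporarily exceeding its initial value: for a non-normal matrix the flow $\exp((A+BC)t)$ can exhibit transient growth in a fixed norm, and the natural Lyapunov argument yields only monotone decay of a weighted quadratic norm rather than of the ambient $L^\infty$ norm used for trimming. The crux of the proof is therefore to leverage asymptotic stability to confine $\tbf{e}$ to the $\epsilon$-ball for all $t$, so that $\|C\tbf{e}(t)\|\le\|C\|\epsilon$ holds uniformly; once that is in place, the remainder is routine bookkeeping with the definitions of trimming, $In_C$, and the supervisory input (\ref{eqn:supervisory}).
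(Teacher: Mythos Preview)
Your approach rests on the same core estimate as the paper's: that $\|\tbf{e}(t)\|=\|\exp((A+BC)t)(y-x)\|\le\|y-x\|\le\epsilon$ for all $t\ge0$, from which $\|C\tbf{e}(t)\|\le\|C\|\epsilon$ and both inclusions follow. The paper packages part~(1) as a proof by contradiction---take a first exit time $\omega$ at which $\tbf{u}_{y,x}$ reaches $\partial U$, then use the bound $\|\tbf{u}_{y,x}(\omega)-\tbf{u}(\omega)\|\le\|C\|\epsilon$ to show $\tbf{u}_{y,x}(\omega)$ is actually interior to $U$---whereas you argue the inclusion directly. For part~(2) the paper takes a detour: it proves $\bigcup_{\delta>0}U_{-(\|C\|\epsilon+\delta)}=U_{-\|C\|\epsilon}$ (invoking Proposition~\ref{prop:open} on openness of trimmed sets) and then applies part~(1) for every $\delta>0$. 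Your one-line argument exploiting the \emph{closed} ball in Definition~\ref{defn:trim} is cleaner and bypasses that machinery entirely; indeed it makes part~(1) a special case of part~(2) rather than the other way around.

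The obstacle you flag at the end is genuine and is shared by the paper: it likewise asserts, without further justification, that negative-real-part spectrum alone yields $\|\exp((A+BC)t)(y-x)\|\le\|y-x\|$ for all $t\ge0$ in the ambient $L^\infty$ norm. As you correctly observe, Hurwitz spectrum guarantees eventual decay but does not preclude transient growth of the operator norm for non-normal $A+BC$. So on this point you are neither behind nor ahead of the paper---you have simply been more candid about what is being assumed.
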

\begin{proof}
  1. We prove the first part of the theorem by
  contradiction.  Assume that there are $\delta,\tau>0$ such
  that
  $\left(U_{-(||C||\epsilon+\delta)}\right)^{[0,\tau]}\nsubseteq
  In_C(y,x,\tau)$.  This means that there exists an input
  $\tbf{u}\in
  \left(U_{-(||C||\epsilon+\delta)}\right)^{[0,\tau]}$ such
  that the image of $[0,\tau]$ by the corresponding
  supervisory input trajectory $\tbf{u}_{y,x}$ is not
  contained inside $U$, i.e. $\tbf{u}_{y,x}([0,\tau])\notin
  U$.  Define $\mathcal{F}=\{t\geq
  0:\tbf{u}_{y,x}([0,t])\notin U\}$.  The set $\mathcal{F}$
  is non-empty because $\tau\in\mathcal{F}$.  Let
  $\omega=\inf{\mathcal{F}}$.  This means that at the
  precise time instant $\omega$, the supervisory input
  trajectory $\tbf{u}_{y,x}$ is at the boundary point of the
  set $U$, i.e. $\tbf{u}_{y,x}(\omega)$ is at the boundary
  point of $U$.

  By (\ref{eqn:supervisory}) we get that
  $\tbf{u}_{y,x}(\omega)=\tbf{u}(\omega)+C.\left(\tbf{y}_{C,\tbf{x}}(\omega)-\tbf{x}(\omega)\right)$
  and from this
  $||\tbf{u}_{y,x}(\omega)-\tbf{u}(\omega)||\leq
  ||C||||\tbf{y}_{C,\tbf{x}}(\omega)-\tbf{x}(\omega)||$.
  But from (\ref{eqn:rate-difference}) we get that
$$\left(\tbf{y}_{C,\tbf{x}}(\omega)-\tbf{x}(\omega)\right)=\exp((A+BC)\omega)(y-x)$$
As $(A+BC)$ has all eigenvalues with negative real part, so
from previous equation we get that
$||\tbf{y}_{C,\tbf{x}}(\omega)-\tbf{x}(\omega)||\leq||y-x||\leq\epsilon$.
Substituting this in what we got earlier, we have
\begin{equation}\label{eqn:epsilon-bound}
  ||\tbf{u}_{y,x}(\omega)-\tbf{u}(\omega)||\leq
  ||C||\epsilon
\end{equation} 
Using this we proceed to prove that $\tbf{u}_{y,x}$ is an
interior point of $U$ which shall be a contradiction to an
earlier conclusion that $\tbf{u}_{y,x}$ is a boundary point
of $U$.  

Consider a closed ball (square)
$\ov{Ball}_\delta(\tbf{u}_{y,x}(\omega))$ of radius $\delta$
around $\tbf{u}_{y,x}(\omega)$.  Consider any point $p\in
Ball_{\delta}(\tbf{u}_{y,x}(\omega))$.  Then
$$||p-\tbf{u}(\omega)||\leq
||p-\tbf{u}_{y,x}(\omega)||+||\tbf{u}_{y,x}(\omega)-\tbf{u}(\omega)||.$$
By substituting from (\ref{eqn:epsilon-bound}) we get that
\begin{equation}\label{eqn:radius}
||p-\tbf{u}(\omega)||\leq
||p-\tbf{u}_{y,x}(\omega)||+||C||\epsilon<\delta+||C||\epsilon
\end{equation}
because $p\in \ov{Ball}_{\delta}(\tbf{u}_{y,x}(\omega))$.

But $\tbf{u}\in
\left(U_{-(||C||\epsilon+\delta)}\right)^{[0,\tau]}$ implies
$\tbf{u}(\omega)\in U_{-(||C||\epsilon+\delta)}$, and then
from (\ref{eqn:radius}) we get that $p\in U$.  This is true
for all $p\in \ov{Ball}_\delta(\tbf{u}_{y,x}(\omega))$ which
means that $\tbf{u}_{y,x}(\omega)$ is an interior point of
$U$.  This is contrary to an earlier conclusion that
$\tbf{u}_{y,x}(\omega)$ is the boundary point of $U$.

This means that the assumption we started with at the
beginning is false.  Hence, $\forall\delta,\tau>0$.
$\left(U_{-(||C||\epsilon+\delta)}\right)^{[0,\tau]}\subseteq
In_C(y,x,\tau)$.

2. The proof of second part of the Proposition is as
follows.  We shall first prove
$\bigcup_{\delta>0}U_{-(||C||\epsilon+\delta)}=U_{-||C||\epsilon}$.
Let $u\in U_{-||C||\epsilon}$.  Since $U_{-||C||\epsilon}$
is open by Proposition~\ref{prop:open}, so
$\exists\delta_u>0.\ov{Ball}_{\delta_u}(u)\subseteq
U_{-||C||\epsilon}$.  By simple geometry, given $u$ is
inside the $||C||\epsilon$-trimmed set $U_{-||C||\epsilon}$,
we get that $\ov{Ball}_{\delta_u+||C||\epsilon}(u)\subseteq
U$.  Equivalently $u\in U_{-(||C||\epsilon+\delta_u)}$.  So,
for every $u\in U_{-||C||\epsilon}$, there exists
$\delta_u>0$ such that $u\in U_{-(||C||\epsilon+\delta_u)}$.
Therefore $U_{-||C||\epsilon}\subseteq
\bigcup_{\delta>0}U_{-(||C||\epsilon+\delta)}$.  Also,
$\bigcup_{\delta>0}U_{-(||C||\epsilon+\delta)}\subseteq
U_{-||C||\epsilon}$ because $\forall\delta>0$,
$U_{-(||C||\epsilon+\delta)}\subseteq U_{-||C||\epsilon}$.
Therefore by sandwitching we get,
$\bigcup_{\delta>0}U_{-(||C||\epsilon+\delta)}=U_{-||C||\epsilon}$.

From the theorem statement, $\forall\delta,\tau>0$,
$\left(U_{-(||C||\epsilon+\delta)}\right)^{[0,\tau]}\subseteq
In_C(y,x,\tau)$.  As the theorem holds for all $\delta>0$
independently of $\tau$, therefore we get that
$\bigcup_{\delta>0}\left(U_{-(||C||\epsilon+\delta)}\right)^{[0,\tau]}\subseteq
In_C(y,x,\tau)$.  Earlier we proved
$\bigcup_{\delta>0}U_{-(||C||\epsilon+\delta)}=U_{-||C||\epsilon}$.
Therefore, $\bigcup_{\delta>0}\left(U_{-||C||\epsilon}\right)^{[0,\tau]}\subseteq
In_C(y,x,\tau)$.  Equivalently by Proposition~\ref{prop:trimming},
$\left(U^{[0,\tau]}\right)_{-||C||\epsilon}\subseteq
In(y,x,\tau)$ .
  \end{proof}
  \section{Trimmed input approximate bisimulation}
\label{sec:tiab}
We define a metric transition system (MTS) as follows.
\begin{defn}[Metric transition system]$~~~$\\
  A Metric Transition System (MTS) is
  $T=\left<X,V,\ra,Y,H\right>$ where $X$ is a state space,
  $V$ is the superset of all possible inputs at any point,
  $[\ra]\subseteq X\times V\times X$ is the transition
  relation, $Y$ is a metric space and $H:X\ra Y$ is the
  output map.
\end{defn}
Since trimming is only defined on open sets
(Definition~\ref{defn:trim}), therefore we identify an
\emph{Open Input Metric Transition System} (OIMTS) as
follows, from which a trimmed open input metric transition
system may be derived after trimming the input set.
\begin{defn}[Open Input Metric Transition System]$~~~$\\
  An Open Input Metric Transition System (OIMTS) is an MTS~
  $T=\left<X,V,\ra,Y,H\right>$ with the additional condition
  that $V$ is an open subset of a normed vector space.
\end{defn}
Related to control systems, the set $V$ in an OIMTS consists
of any open set of input trajectories.
\subsection{Approximate bisimulation without trimming}
We first define one version of $\epsilon$-approximate
simulation according to~\cite{2008-tabuada-approximate}
which is useful when supervisory
feedback$^{\ref{foot:supervisory}}$ is used in symbolic
abstraction.  A different and more common version of
approximate simulation is defined in~\cite{metrics07} but the
definition in ~\cite{metrics07} is not suitable when supervisory
feedback$^{\ref{foot:supervisory}}$ is introduced in
symbolic abstraction, as will be explained in
Remark~\ref{rem:supervisory}.

The author
of~\cite{2008-tabuada-approximate} actually defines a
stronger $\epsilon-\delta$ approximate simulation, having a
$\delta$-reflexivity condition.  But we restrict to the
general $\epsilon$-approximate simulation leaving
$\delta$-reflexivity.
\begin{defn}\label{defn:simulation}
  Let $T=\left<X,V,\ra_1,Y,H\right>$ and
  $T^\pr=\left<X^\pr,V^\pr,\ra_2,Y,H^\pr\right>$ be two MTS.
  Let $Y$ be equipped by the metric $d:Y\times
  Y\ra\mathbb{R}_{\geq 0}$.  Note that the output range $Y$
  is same for both the MTS but the input sets $V$ and
  $V^\pr$ may be different.  We say that a non-empty
  relation $R\subset X\times X^\pr$ is an
  $\epsilon$-approximate simulation relation of $T$ by
  $T^\prime$, iff $\forall (x,x^\pr)\in R$ all the following
  hold
\begin{enumerate}
\item $d\left(H(x),H^\pr(x^\pr)\right)\leq\epsilon$.
\item $\forall y\in X~\wedge~\tbf{u}\in V$, if
  $x\stackrel{\tbf{u}}{\ra}_1y$ then there exist $y^\pr\in
  X^\pr$ and $\tbf{u}^\pr\in V^\pr$ such that
  $x^\prime\stackrel{\tbf{u}^\pr}{\ra}_2y^\prime~\text{and}~(y,y^\prime)\in
  R$.  Note that $\tbf{u}$ and $\tbf{u}^\pr$ may be
  different.
\end{enumerate}  
\end{defn}
\tbf{Approximate bisimulation:} Consequently, we say that
$T$ and $T^\prime$ are $\epsilon$-bisimilar to each other
iff $\exists$ a non-empty relation $R$ such that $R$
$\epsilon$-approximately simulates $T$ by $T^\prime$ and
$R^{-1}$ $\epsilon$-approximately simulates $T^\prime$ by
$T$.
\begin{rem}\label{rem:supervisory}
  The more common definition of $\epsilon$-approximate
  simulation in~\cite{metrics07} requires that one
  transition may simulate another only if both the
  transitions are driven by the same input.  But when
  supervisory feedback~\footnote{\label{foot:supervisory}
    The general definition of supervisory feedback function
    is given in the Appendix, but in the paper we shall only
    discuss locally asymptotically stabilizing linear
    supervisory feedback.} is used, then it may happen that
  one transition simulated by another transition is such
  that the driving input of former transition is a feedback
  supervisory function$^{\ref{foot:supervisory}}$ of the
  input of latter transition and may not be equal to the
  input of latter transition.  As such the definition
  in~\cite{metrics07} is restrictive in the sense that it
  can not be used to analyze symbolic abstraction involving
  supervisory feedback.  On the other hand, the
  Definition~\ref{defn:simulation} of our paper, which is
  also previously stated in~\cite{2008-tabuada-approximate},
  allows us to interpret symbolic abstraction involving
  supervisory feedback.
\end{rem}

\subsection{Trimmed-input approximate bisimulation for
  OIMTS}

\begin{defn}[Trimmed open input metric transition system]\label{defn:toimts}$~~~$\\
  Let $T=\left<X,V,\ra,Y,H\right>$ be a open input metric
  transition system (OIMTS).  Then we define the $\rho$
  trimmed transition system
  $T_{-\rho}=\left<X,V_{-\rho},\xrightarrow[]{}|_{V_{-\rho}},Y,H\right>$
  where $V_{-\rho}$ is the obtained after trimming the open
  set $V$ by $\rho$ near the boundary and
  $\xrightarrow[]{}|_{V_{-\rho}}$ is the restriction of the
  original transition relation to $V_{-\rho}$.
 \end{defn}
\begin{defn}[Trimmed input approximate
  simulation]$~~~$\\
  Let $T=\left<X,V,\ra_1,Y,H\right>$ and
  $T=\left<X^\pr,V^\pr,\ra_2,Y,H\right>$ be two OIMTS.  For
  any $\rho,\epsilon>0$, we say that a relation $R\subset
  X\times X^\pr$ is a $\rho$-trimmed $\epsilon$-approximate
  simulation of the OIMTS $T$ by $T^\prime$ iff the
  relation $R$ is an $\epsilon$-approximate simulation of
  $T_{-\rho}$ by $T^\prime$, where $T_{-\rho}$ is the
  $\rho$-trimmed OIMTS obtained from $T$.
\end{defn}

\begin{defn}[Trimmed-input approximate bisimulation]$~~~$\\
  Consequently, $R$ is a $\rho$-trimmed
  $\epsilon$-approximate bisimulation between $T$ and
  $T^\pr$ iff $R$ $\epsilon$-approximately simulates
  $T_{-\rho}$ by $T^\prime$ and $R^{-1}$
  $\epsilon$-approximately simulates $T^\pr_{-\rho}$ by $T$.
\end{defn}
\section{Near completeness and interpretation of trimmed
  input approximate bisimulation}
\label{sec:interpretation}
We define near completeness as follows.
\begin{defn}\label{defn:near-completeness}
  For any $\gamma>0$, we say that an OIMTS $\widehat{T}$ is
  $\gamma$-near complete with respect to an OIMTS $T$ iff
  there exist $\alpha,\beta>0$ and an OIMTS $T^\pr$ such
  that all the following hold (i) $\alpha+\beta=\gamma$ (ii)
  $T^\pr_{-\beta}=\widehat{T}$ (iii) $T_{-\alpha}$ is
  (approximately) simulated by $T^\pr$.
\end{defn}
{Let $T$ and $T^\pr$ be two OIMTS such that they are
  $\rho$-trimmed input $\epsilon$-approximately bisimilar.}
Then we make the following interpretations about the $\rho$
trimmed transition system $T_{-\rho}^\pr$.  
\begin{itemize}

\item \tbf{$\epsilon$-Proximity}: The distance between two
  related states of $T$ and $T_{-\rho}^\pr$ is less than
  $\epsilon$ since $T$ and $T^\pr$ are $\rho$-trimmed input
  $\epsilon$-approximately bisimilar.\\
\item \tbf{Soundness}: $T^\pr_{-\rho}$ is
  $\epsilon$-approximately simulated by $T$ since $T$ and
  $T^\pr$ are $\rho$-trimmed input $\epsilon$-approximately
  bisimilar.  This means that $T^\pr_{-\rho}$ is sound
  with respect to $T$.

\emph{\underline{Disambiguation}}. It is to be noted that $T^\pr$ may
not be sound with respect to $T$.  Instead we demonstrated
that $T^\pr_{-\rho}$ is sound with respect to $T$.\\
\item \tbf{$2.\rho$-Near completeness}: Since $T$ and
  $T^\pr$ are $\rho$-trimmed input $\epsilon$-approximately
  bisimilar, so $T_{-\rho}$ is $\epsilon$-approximately
  simulated by $T^\pr$.  On the other hand, $T^\pr_{-\rho}$
  is obtained after further trimming the input set of
  $T^\pr$ by $\rho$.  Therefore, by the
  Definition~\ref{defn:near-completeness}, we get that
  $T^\pr_{-\rho}$ is $2\rho$-near complete with respect to
  $T$ by substituting $\alpha=\beta=\rho$ where $\alpha$ and
  $\beta$ are the parameters stated in
  Definition~\ref{defn:near-completeness}.
\end{itemize}
\emph{In a vague sense, when $\rho$ and $\epsilon$ are very
  small, then something like $T^\pr_{-\rho}$ transpires as a
  {reasonably good} abstraction of $T$ after establishing
  $\rho$-trimmed input $\epsilon$-approximate bisimulation
  between $T$ and $T^\pr$.}
\section{State and time quantization}
\begin{defn}[Time quantized transition relation of control
  system]\label{defn:control-transition}$~~$
  For a linear control system $\Sigma$ and two points $x$
  and $y$ in state space and $\tbf{u}\in U^{[0,\tau]}$, we
  write $x\xrightarrow{\tbf{u}}y$ iff
  $\tbf{x}(x,\tau,\tbf{u})=y$.  For any $x$ in state space
  and $\tbf{u}\in U^{[0,\tau]}$, define
  $Reach(x,\xrightarrow{\tbf{u}})=\{y\in\mathbb{R}^n:x\xrightarrow{\tbf{u}}y\}$
\end{defn}

Let an open input linear control system
$\Sigma=\left<A_{n\times n},B_{n\times
    m},U,\bigcup_{t\in\mathbb{R}^+}U^{[0,t]}\right>$ where
$U$ is open. Then for any $\tau>0$, the time quantized
open input metric transition system (OIMTS)
$T^\tau(\Sigma,X)$ is defined as
$$T^\tau(\Sigma)=\left<\mathbb{R}^n,U^{[0,\tau]},\longrightarrow,\mathbb{R}^n,id\right>$$
{where $id$ is the identity output map} and
$\longrightarrow$ is the transition relation according to
Definition~\ref{defn:control-transition}.  We leave it to
the reader to verify that since $U$ is open, so
$U^{[0,\tau]}$ is also open and hence the MTS $T^\tau(\Sigma)$ is
also an OIMTS.  Therefore, $T^\tau(\Sigma)$ shall admit the
notion of trimming.

For any $\eta>0$ and $\tbf{u}\in U^{[0,\tau]}$, we define a
transition relation $\xrightarrow[\eta]{\tbf{u}}\subset
[\mathbb{R}^n]_\eta\times [\mathbb{R}^n]_\eta$ as follows.
$x\xrightarrow[\eta]{\tbf{u}}y$ if and only if $\exists
y^\pr\in Reach(x,\xrightarrow{\tbf{u}}).
||y-y^\pr||\leq\eta/2$.  Then the state-time quantized
OIMTS $T^{\tau,\eta}(\Sigma)$ is defined as
$$T^{\tau,\eta}(\Sigma)=\left<[\mathbb{R}^n]_\eta,U^{[0,\tau]},\xrightarrow[\eta]{},\mathbb{R}^n,id\right>.$$

\begin{cor} \label{cor:time-stamp} Let
  $\Sigma=\left<A_{n\times n},B_{n\times
      m},U,\bigcup_{t\in\mathbb{R}^+}U^{[0,t]}\right>$ be a
  locally asymptotically stabilizable linear control system with open and
  bounded input set $U$ and a stabilization matrix $C$.
  Then $\forall \epsilon>0$ and $\eta:0<\eta<\epsilon$,
  there exists $\tau>0$ such that
  $||\epsilon\exp((A+BC)\tau)||<\eta/2$.  For such $\tau$,
  if $y,x: ||y-x||\leq\epsilon$ and $\tbf{u}\in
  (U_{-||C||\epsilon})^{[0,\tau]}$ and
  $x\xrightarrow{\tbf{u}} x^\pr$, then $\exists
  y^\pr.y\xrightarrow{\tbf{u}_{y,x}}y^\pr$ satisfying
  $||y^\pr-x^\pr||<\eta/2$.
\end{cor}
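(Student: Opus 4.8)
The plan is to split the statement into its two assertions and dispatch them in order. The first assertion---existence of a suitable $\tau$---is a standard consequence of $A+BC$ being Hurwitz: by Remark~\ref{rem:matrix-exist} all its eigenvalues have negative real part, so there are constants $K>0$ and $\alpha>0$ with $\|\exp((A+BC)t)\|\leq Ke^{-\alpha t}$. Hence $\|\epsilon\exp((A+BC)\tau)\|=\epsilon\|\exp((A+BC)\tau)\|\leq \epsilon Ke^{-\alpha\tau}\to 0$ as $\tau\to\infty$, so some finite $\tau$ drives this quantity below $\eta/2$. I would fix such a $\tau$ for the rest of the argument.

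For the second assertion, the key observation is that the companion trajectory $\tbf{y}_{C,\tbf{x}}$ defined through~(\ref{eqn:rate-difference}) is exactly the trajectory of $\Sigma$ started at $y$ and driven by the supervisory input $\tbf{u}_{y,x}$ of~(\ref{eqn:supervisory}). I would verify this by substituting $\dot{\tbf{x}}=A\tbf{x}+B\tbf{u}$ into~(\ref{eqn:rate-difference}) and regrouping, obtaining $\dot{\tbf{y}}_{C,\tbf{x}}=A\tbf{y}_{C,\tbf{x}}+B(\tbf{u}+C(\tbf{y}_{C,\tbf{x}}-\tbf{x}))=A\tbf{y}_{C,\tbf{x}}+B\tbf{u}_{y,x}$, so $\tbf{y}_{C,\tbf{x}}$ solves the system ODE with driving input $\tbf{u}_{y,x}$ and initial condition $y$.

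The crux---and the main obstacle---is ensuring this companion trajectory is a \emph{legitimate} trajectory of $\Sigma$, which demands that $\tbf{u}_{y,x}$ stay inside the bounded input set $U$ over all of $[0,\tau]$. This is precisely where the hypothesis $\tbf{u}\in(U_{-\|C\|\epsilon})^{[0,\tau]}$ enters: since $\|y-x\|\leq\epsilon$, part~2 of Theorem~\ref{thm:enabled} gives $(U_{-\|C\|\epsilon})^{[0,\tau]}\subseteq In_C(y,x,\tau)$, so $\tbf{u}\in In_C(y,x,\tau)$, meaning $\tbf{u}_{y,x}([0,\tau])\subseteq U$. Thus $\tbf{u}_{y,x}\in U^{[0,\tau]}$ is admissible and $y':=\tbf{y}_{C,\tbf{x}}(\tau)=\tbf{x}(y,\tau,\tbf{u}_{y,x})$ is well defined, yielding the transition $y\xrightarrow{\tbf{u}_{y,x}}y'$ in the sense of Definition~\ref{defn:control-transition}.

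Finally I would bound the deviation by reusing the computation already present in the proof of Theorem~\ref{thm:enabled}. Solving the linear ODE~(\ref{eqn:rate-difference}) with initial difference $\tbf{y}_{C,\tbf{x}}(0)-\tbf{x}(0)=y-x$ gives $\tbf{y}_{C,\tbf{x}}(t)-\tbf{x}(t)=\exp((A+BC)t)(y-x)$; evaluating at $t=\tau$ yields $y'-x'=\exp((A+BC)\tau)(y-x)$. Taking norms, $\|y'-x'\|\leq\|\exp((A+BC)\tau)\|\,\|y-x\|\leq\epsilon\|\exp((A+BC)\tau)\|=\|\epsilon\exp((A+BC)\tau)\|<\eta/2$ by the choice of $\tau$ from the first part, which closes the argument. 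Essentially all the genuine content sits in the admissibility step invoking Theorem~\ref{thm:enabled}; the remaining pieces are the Hurwitz decay estimate and the standard matrix-exponential solution formula.
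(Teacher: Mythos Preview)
Your proposal is correct and follows essentially the same approach as the paper: invoke Theorem~\ref{thm:enabled} to ensure the supervisory input $\tbf{u}_{y,x}$ remains admissible in $U$ on $[0,\tau]$, use the matrix-exponential solution of~(\ref{eqn:rate-difference}) to write $y'-x'=\exp((A+BC)\tau)(y-x)$, and conclude via the Hurwitz decay of $\exp((A+BC)\tau)$. Your write-up is in fact slightly more explicit than the paper's, spelling out the verification that $\tbf{y}_{C,\tbf{x}}$ is the $\Sigma$-trajectory driven by $\tbf{u}_{y,x}$ and invoking the standard bound $\|\exp((A+BC)t)\|\leq Ke^{-\alpha t}$ rather than merely asserting exponential decay.
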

\begin{proof}
  Firstly, we have to show that for any chosen
  $\tau,\epsilon>0$, if $\tbf{u}\in
  (U_{-||C||\epsilon})^{[0,\tau]}$ and
  $||y-x||\leq \epsilon$, then $\tbf{u}_{y,x}$ is enabled
  until the chosen $\tau>0$.  If $\tbf{u}\in
  (U_{-||C||\epsilon})^{[0,\tau]}$ and
  $||y-x||\leq\epsilon$, then by Theorem~\ref{thm:enabled}
  we get that $\tbf{u}\in In(y,x,\tau)$ or equivalently
  $\tbf{u}_{y,x}([0,\tau])\subseteq U$.  Therefore,
  $\tbf{u}_{x,y}$ is enabled until time $\tau$.  This means
  there exists a $y^\pr.y\xrightarrow{\tbf{u}_{y,x}}y^\pr$.

  From (\ref{eqn:rate-difference}),(\ref{eqn:supervisory})
  and that $||y-x||\leq \epsilon$, we get
  $(x^\pr-y^\pr)=(x-y)\exp((A+BC)\tau)\leq
  ||\epsilon\exp((A+BC)\tau)||$.  Since $(A+BC)$ has all
  eigenvalues with negative real part, so
  $||\epsilon\exp((A+BC)\tau)||$ tends exponentially to
  zero as $\tau\ra\infty$.  Therefore, we can choose
  sufficiently large $\tau$ such that
  $||x^\pr-y^\pr||<\eta/2$.
\end{proof}
\begin{res}\label{res:main}
  Let $\Sigma=\left<A_{n\times n},B_{n\times
      m},U,\bigcup_{t\in\mathbb{R}^+}U^{[0,t]}\right>$ be a
  locally asymptotically stabilizable linear control system with open and
  bounded input set $U$ and a stabilization matrix $C$.
  Then for all $\epsilon>0$ and $\eta:0\leq
  \eta<\epsilon/2$, we can choose $\tau>0$ such that
  $||\epsilon\exp((A+BC)\tau)||<\eta/2$ and consequently
  $T^\tau(\Sigma)$ is $||C||\epsilon$-trimmed
  $\epsilon$-approximately bisimilar to
  $T^{\tau,\eta}(\Sigma)$.
\end{res}
\begin{proof}
  On the basis of Corollary~\ref{cor:time-stamp}, we choose
  a $\tau>0$ such that $||\epsilon\exp((A+BC)\tau)||<\eta/2$
  and from that we derived, $\forall
  y,x:||y-x||\leq\epsilon$, if $\tbf{u}\in
  (U_{-||C||\epsilon})^{[0,\tau]}$ and
  $x\xrightarrow{\tbf{u}} x^\pr$, then $\exists
  y^\pr.y\xrightarrow{\tbf{u}_{y,x}}y^\pr$ and
  $||y^\pr-x^\pr||<\eta/2$.

  Choose a relation $R\subset \mathbb{R}^n\times
  [\mathbb{R}^n]_{\eta}$ as $(x,y)\in R$ if and only
  if $||x-y||\leq\epsilon$.  We shall prove that $R$ is the
  required $(||C||\epsilon+\delta)$-trimmed
  $\epsilon$-approximate bisimulation relation.  For this we
  have to prove, by the Definition of trimmed-input
  approximate bisimulation, both the following (i) $R$
  $\epsilon$-approximately simulates
  $T_{-||C||\epsilon}^\tau(\Sigma,X)$ by
  $T^{\tau,\eta}(\Sigma)$ by $R$. (ii) $R^{-1}$
  $\epsilon$-approximately simulates
  $T_{-||C||\epsilon}^{\tau,\eta}(\Sigma)$ by
  $T^{\tau}(\Sigma)$.

  The proof of (i) is as follows.  Let $(y,x)\in R$.  The
  transitions in
  $T_{-||C||\epsilon}^\tau(\Sigma,X)$ are driven by
  inputs in
  $(U^{[0,\tau]})_{-||C||\epsilon}=(U_{-||C||\epsilon})^{[0,\tau]}$
  (refer to definition of trimmed OIMTS and
  Proposition~\ref{prop:trimming}).  Let $(x,y)\in R$ and
  $x\xrightarrow{\tbf{u}} x^\pr$ be a transition in
  $T_{-||C||\epsilon}^\tau(\Sigma)$.  $(x,y)\in R$
  implies $||y-x||\leq\epsilon$ and hence by the choice of
  $\tau$ as stated in at the beginning of this proof, we
  have that there exists $y^\pr$ such that
  $y\xrightarrow{\tbf{u}_{y,x}} y^\pr$ and
  $||y^\pr-x^\pr||<\eta/2$.  Choose any
  $y^\dpr\in[\mathbb{R}^n]_{\eta}$ such that
  $||y^\pr-y^\dpr||\leq \eta/2$.  It is easy to see that
  such a point $y^\dpr$ exists on the grid
  $[\mathbb{R}^n]_{\eta}$.  Then by the way the
  state-quantized transition relation $\xrightarrow[\eta]{}$
  is defined, we get that
  $y\xrightarrow[\eta]{\tbf{u}_{y,x}} y^\dpr$ because
  $y\xrightarrow{\tbf{u}_{y,x}} y^\pr$,
  $||y^\pr-y^\dpr||\leq \eta/2$ and
  $y^\dpr\in[\mathbb{R}^n]_{\eta}$.  By triangular
  inequality, $||y^\dpr-x^\pr||\leq
  ||y^\pr-x^\pr||+||y^\pr-y^\dpr||\leq\eta/2+\eta/2=\eta\leq\epsilon$
  which means that $(x,y^\dpr)\in R$.  Also, by the way $R$
  was chosen earlier, it is an $\epsilon$ proximate
  relation.  This completes the proof of (i).

  We prove (ii) as follows.  Let $(y,x)\in R^{-1}$.  The
  transitions in $T^{\tau,\eta}_{-||C||\epsilon}$
  are driven by input trajectories in
  $(U^{[0,\tau]})_{-||C||\epsilon}=(U_{-||C||\epsilon})^{[0,\tau]}$
  by the definition of the trimmed OIMTS.  If
  for any
  $\tbf{u}\in(U_{-||C||\epsilon})^{[0,\tau]}$ we
  have $y\xrightarrow[\eta]{\tbf{u}}y^\pr$, then there
  exists $y^\dpr\in \mathbb{R}^n$ such that
  $||y^\dpr-y^\pr||<\eta/2$ and
  $y\xrightarrow{\tbf{u}}y^\dpr$, by the definition of the
  state-quantized transition relation
  $\xrightarrow[\eta]{}$.  Since
  $\tbf{u}\in(U_{-||C||\epsilon})^{[0,\tau]}$, so
  by the choice of $\tau$ at the beginning of this proof, we
  have an $x^\pr$ such that
  $x\xrightarrow{\tbf{u}_{x,y}}x^\pr$ and
  $||x^\pr-y^\dpr||<\eta/2$.  So, by triangular inequality
  we get that
  $||x^\pr-y^\pr||\leq||x^\pr-y^\dpr||+||y^\pr-y^\dpr||\leq
  \eta/2+\eta/2=\eta\leq\epsilon$.  Therefore,
  $(y^\pr,x^\pr)\in R^{-1}$.  Also, by the way $R$ was
  chosen earlier,
  $R^{-1}$ is an $\epsilon$ proximate relation.  This
  completes the proof of (ii).
\end{proof}
\section{The final symbolic model}\label{sec:final}
Let a quantized linear control system be
$\widetilde{\Sigma}=\left<A_{n\times n},B_{n\times
    m},\widetilde{U},\widetilde{\mc{U}}\right>$ where
$\widetilde{U}$ is a finite subset of an open set and
bounded set $U$ and $\widetilde{\mc{U}}$ is a finite set
containing piecewise constant input trajectories with
co-domain $\widetilde{U}$.  Consider that its \emph{analog
  approximation} is $\Sigma=\left<A_{n\times n},B_{n\times
    m},U,\bigcup_{t\in\mathbb{R}^+}U^{[0,t]}\right>$ which
is  {locally asymptotically stabilizable with open and bounded input
  space} $U$ and a stabilization matrix $C$.

Then for any desired precision $\epsilon>0$,
we can choose any state-quantization parameter
$\eta:0<\eta<\epsilon$, such that for any time quantization
$\tau>0$ satisfying $||\epsilon\exp((A+BC)\tau)||<\eta/2$,
we get that $T^\tau(\Sigma)$ is $||C||\epsilon$-trimmed
$\epsilon$-approximately bisimilar to
$T^{\tau,\eta}(\Sigma)$ by Result~\ref{res:main}.

With $\tau,\eta$ chosen as above for a given $\epsilon$, the
$||C||\epsilon$-trimmed $\tau,\eta$ state-time quantized
OIMTS $T^{\tau,\eta}_{-||C||\epsilon}(\Sigma)$ will be
employed in controller synthesis after restricting to
quantized inputs.  Note that the trimming of $||C||\epsilon$
is necessary for the symbolic model to be sound with respect
to $T^\tau(\Sigma)$.

\tbf{Soundness, proximity and near-completeness:} Recall the
interpretation of near completeness in
Section~\ref{sec:interpretation}.  The final symbolic model
$T^{\tau,\eta}_{-||C||\epsilon}(\Sigma)$ is sound,
$\epsilon$-proximate and $2{||C||\epsilon}$-near complete
with respect to $T^{\tau}(\Sigma)$ since
$T^{\tau,\eta}(\Sigma)$ is $||C||\epsilon$-trimmed
$\epsilon$ approximately bisimilar to $T^{\tau}(\Sigma)$.
Note that the symbolic model taken for controller synthesis
is $T^{\tau,\eta}_{-||C||\epsilon}(\Sigma)$ but not
$T^{\tau,\eta}(\Sigma)$ because the latter may not be sound
with respect to $T^\tau(\Sigma)$.

\tbf{Restricting the trimmed open input symbolic model to
  the quantized input set of actual control system:} The
open and $||C||\epsilon$-trimmed input $\eta,tau$ state-time
quantized transition system $T^{\tau,\eta}_{-||C||\epsilon}$
is
$T^{\tau,\eta}_{-||C||\epsilon}(\Sigma)=\left<[\mathbb{R}^n]_\eta,(U_{-\rho})^{[0,\tau]},\xrightarrow[\eta]{},\mathbb{R}^n,id\right>.$
Then the final symbolic model restricted to the actual
input-quantized control system $\widetilde{\Sigma}$ which
may be used in controller synthesis will be the transition
system
$\widetilde{T^{\tau,\eta}_{-||C||\epsilon}}(\widetilde{\Sigma})=\left<[\mathbb{R}^n]_\eta,(\widetilde{\mc{U}}\cap(U_{-\rho})^{[0,\tau]}),\xrightarrow[\eta]{},\mathbb{R}^n,id\right>.$

\tbf{Reducing the number of edges of the symbolic model:} If
$n$ is the number of representative points in the quantized
state space restricted to a desired compact set, then the
number of labeled edges emanating from any representative
point may far exceed $n$, because at each of the
representative state points, the number of input labels is
equal to the cardinality of
$\widetilde{\mc{U}}\cap\left(U_{-||C||\epsilon)}\right)^{[0,1]}$,
which could be very large.  Instead we may select, by
heuristic computations at each representative point, only a
subset of the input trajectories whose reach set, by the
transition relation $\xrightarrow[\eta]{}$, covers at least
all the reachable points in the finite state-quantized
space.  This would eliminate many labeled edges whose reach
points are the same as that of the former selected labeled
edges, while the \emph{sub-graph} so obtained is as complete
as
$\widetilde{T_{-||C||\epsilon}^{\tau,\eta}}(\widetilde{\Sigma})$.
Similar constructions have been discussed
in~\cite{pgt08,2012-majumdar-approximately}.

\section{Example}
\label{sec:example}
We take a quantized input linear system
$\widetilde{\Sigma}=\left<A_{n\times n},B_{n\times
    m},\widetilde{U},\widetilde{\mc{U}}\right>$ with $A=\left[\begin{array}{lcr}
    \mbox{0} & 1\\
    \mbox{-1} & 2\\
  \end{array}\right]$, $B=\left[\begin{array}{lcr}\mbox{0}\\\mbox{1} \end{array}
\right]$,
$\widetilde{U}=\{-0.49,-0.48,-0.47,...,-0.1,0,0.1,...,0.47,0.48,0.49\}$
and $\widetilde{\mc{U}}=\{\tbf{u}:\forall
t\in\mathbb{R}_{\geq 0}.\tbf{u}(t)\in\widetilde{U}\wedge\tbf{u}(t)=\tbf{u}\left(0.01*floor(t/0.01)\right)\}$
where $floor(.)$ denotes the greatest integer smaller than
the argument.  Then the analog input approximation of
$\widetilde{\Sigma}$ could be $\Sigma=\left<A_{n\times
    n},B_{n\times m},U,\mc{U}\right>$ with $U=]-0.5,0.5[$
which is an open set and $\mc{U}$ as the set of all
piecewise continuous input trajectories with co-domain $U$.

$A$ has both eigenvalues equal to $+1$ and so $\Sigma$ is
unstable.  But the system has a stabilization matrix
$C=[0~-4]$ because $(A+BC)=\left[\begin{array}{lcr}
    \mbox{0} & 1\\
    \mbox{-1} & 2\\
  \end{array}\right]+\left[\begin{array}{lcr}\mbox{0}\\\mbox{1} \end{array}
\right].[0~-4]=\left[\begin{array}{lcr}
    \mbox{0} & 1\\
    \mbox{-1} & -2\\
  \end{array}\right]$ has both eigenvalues equal to $-1$,
which is negative.

We are given a desired precision $\epsilon=0.12$.  We have
to determine the state-time quantization parameters
$\eta,\tau$ and the trimming parameter $\rho$.  We may take
$\eta$ to be anything less than $\epsilon=0.12$. Let
$\eta=0.1$.  Then the required amount of trimming is $\rho =
||C||\epsilon=4\times 0.14=0.48$ from Result~\ref{res:main}.
In fact, $\rho$ could be anything greater than or equal to
$0.48$ but should be at least $0.48$.  Note that the
derivation of $\rho$ is independent of $\tau$ which we have
not yet determined.  We take $\tau=1$ and demonstrate that
this particular choice of $\tau$ is valid.  For this we have
to prove that $\epsilon||\exp((A+BC)\tau)||<\eta/2$.  We get
$||\exp((A+BC)\tau)||\leq \exp(-1.\tau)$ because $(A+BC)$
has both eigenvalues equal to $-1$.  Then
$\epsilon||\exp((A+BC)\tau)||<0.12\exp(-1\times
1)=0.044<0.05=\eta/2$.  So, $\tau=1$ is a valid time
quantization parameter.

Consequently, the trimmed input trajectory set of the
symbolic model, taking $\rho=0.48$ and $\tau=1$, is
$(U_{-\rho})^{[0,\tau]}=\left((]-5,5[)_{-0.48}\right)^{[0,1]}=\left(]-4.52,4.52[\right)^{[0,1]}$.
Then the the state-time quantized symbolic abstraction of
$\Sigma$
is $$T^{1,0.1}_{-0.48}(\Sigma)=\left<[\mathbb{R}^2]_{0.1},\left(]-4.52,4.52[\right)^{[0,1]},\xrightarrow[0.1]{},\mathbb{R}^2,id\right>.$$
This symbolic model is sound and $0.12$-proximate as proved
in Section~\ref{sec:interpretation}.  Also,
$T^{1,0.1}_{-0.48}(\Sigma)$ is $2\times 0.48=0.96$-near
complete with respect to $T^{1}(\Sigma)$ in the sense that
$T^{1}_{0.96}(\Sigma)$ is approximately simulated by
$T^{1,0.1}_{-0.48}(\Sigma)$ as proved in
Section~\ref{sec:interpretation}.

Finally, a finite symbolic model can be obtained for any
compact region of state space and the actual quantized input
trajectory set $\widetilde{\mc{U}}$ by restricting
$T^{1,0.1}_{-0.48}(\Sigma)$ to the compact region and
$\widetilde{\mc{U}}$.  The restriction to
$\widetilde{\mc{U}}$ is defined in Section~\ref{sec:final}.
Furthermore, we select only a subset of the total number of
edges at each representative point to obtain a sub-graph
whose number of edges emanating from each representative
point is less than the total number of representative points
in the finite model, while the \emph{sub-graph} so obtained
is as complete as
$\widetilde{T_{-||C||\epsilon}^{\tau,\eta}}(\widetilde{\Sigma})$.  This is
explained in Section~\ref{sec:final}.  Since similar
constructions have been discussed
in~\cite{pgt08,2012-majumdar-approximately}, we do not
construct the actual model in our paper for this example.
However, we shall work out an illustration of the
relationship between the input at a representative point in
the quantized state space, and the corresponding supervisory
feedback at a point in the original state space which is
symbolically related to the representative point.  We shall
also obtain a quantized supervisory feedback which lies
within $\widetilde{\mc{U}}$, corresponding to the analog
supervisory feedback.

Two points are said to be symbolically related if the norm
of the difference between them is less than $\epsilon$.  $z=
(0.23, -0.24)$ is symbolically related to $x= (0.2, -0.2)$
because
$||z-x||=\max\{0.23-0.2,0.24-0.2\}=0.04<\epsilon=0.12$.
Notice that $x$ is a point in the quantized state space with
$\eta=0.1$.  We give a constant input $\tbf{u}(t)=1.1$
$\forall t\in[0,1]$ driving from $x=(0.2,-0.2)$.  Notice
that
$\tbf{u}\in\widetilde{\mc{U}}\cap\left(]-4.52,4.52[\right)^{[0,1]}$.
By computation, we get that
$\tbf{x}(1,(0.2,-0.2),\tbf{u})=(0.56,1.36)$.  By the
definition of the transition relation
$\xrightarrow[0.1]{\tbf{u}}$ defined on the state-time
quantized symbolic model, we get that
$x=(0.2,-0.2)\xrightarrow[0.1]{\tbf{u}}(0.6,1.4)=x^\pr$
where $(0.56,1.36)$ is rounded off to $(0.6,1.4)$.
\begin{figure*}[h]
\vspace{1em}
  \centering \setlength\fboxsep{0pt}
  \setlength\fboxrule{0.5pt}
  {\includegraphics[scale=0.3, bb=300 0 900
    300]{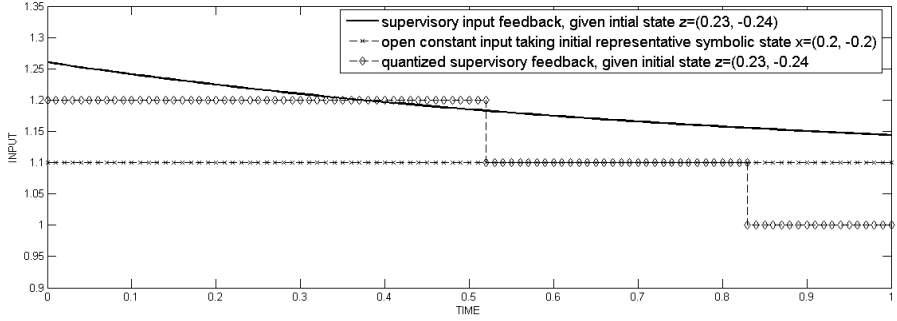}}
  \caption{Corresponding analog and quantized supervisory
    feedback inputs at $z=(0.23,-0.24)$ for constant input
    $u=1.1$ at symbolic point
    $x=(0.2,-0.2)$.}\label{fig:input}
\end{figure*}
Then the corresponding analog supervisory input driving from
$z$ is given as $\tbf{u}_{z,x}=1.1+C(\tbf{z}(t)-\tbf{x}(t))$
according to Equation~\ref{eqn:supervisory}.  The graph of
the analog supervisory input is displayed in
Figure~\ref{fig:input}.  By inducing $\tbf{u}_{z,x}$ at $z$,
we reach the point $z^\pr=(0.56,1.35)$ at time $t=1$.  Since
$||z^\pr-x^\pr||=\max\{0.6-0.56,1.4-1.35\}=0.05<\epsilon=0.12$,
therefore $z^\pr$ is symbolically related to $x^\pr$.  Thus
our assertion that the symbolic model is $0.12$-proximate is
validated in this specific example.

Next, we heuristically quantized the supervisory input
$\tbf{u}_{z,x}$ to lie in the quantized input trajectory
space $\widetilde{\mc{U}}$, and obtained the quantized
supervisory feedback input displayed in
Figure~\ref{fig:input}.  There are also formal approaches to
feedback
quantization~\cite{2000-brockett-quantized,1990-delchamps-stabilizing,2005-fu-sector}.
Driving with the quantized input starting from $z=(0.23,
-0.24)$, we reached the point $z^\dpr=(0.58,1.38)$ at 1
second.  Since
$||z^\dpr-x^\pr||=\max\{0.6-0.58,1.4-1.38\}=0.02<\epsilon=0.12$,
therefore $z^\dpr$ is also symbolically related to $x^\pr$.
This means that, for this specific illustration, the
proximity of $0.12$ is valid even after quantizing the
supervisory input.

From Figure~\ref{fig:input}, the difference between
$\tbf{u}_{z,x}$ and $\tbf{u}$ is less that
$0.48=||C||\epsilon=\rho$. Therefore, our assertion in
Theorem~\ref{thm:enabled} is valid for this example.  Also
the difference between the \emph{quantized} supervisory
feedback input and $\tbf{u}$ is less than $0.48$.
\section{Conclusion}
While allowing supervisory feedback to relate inputs between
two transition systems, we have found a formal way of
parametrization of completeness of a state-time quantized
symbolic model with respect to the time quantized system
model.  We demonstrated how sound state-time quantized
symbolic models of \emph{possibly unstable} but
stabilizable, bounded input and already input-quantized
linear systems can be built with arbitrarily small proximity
and trimming (near-completeness), with respect to the
time-quantized model.  In future, we would like to extend
this work to construct sound, near-complete, and proximate
symbolic models for non-linear systems.
\section*{Appendix}
\begin{defn}
\label{defn:supervisory}
  Let $U\subseteq\mathbb{R}^m$ for some
  $m\in\mathbb{N}$. Then for $n\in\mathbb{N}$, a function
  $k:\mathbb{R}^n\times\mathbb{R}^n\times U$ is called a
  supervisory function iff all the following
  hold.  
  \begin{enumerate}
\item $k$ is continuously differentiable on
  $\mathbb{R}^{2n}\backslash\Delta$ where
  $\Delta=\{(x,x):x\in\mathbb{R}^n\}$;
\item $k(y,x,u)=u$ $\forall$ $(y,x)\in\Delta$;
\end{enumerate}
\end{defn}
We say that a supervisory function is linear if it is of the
form $k(y,x,u)=u+C(y-x)$ for some matrix $C$.  It is easy to
see that $\left(u+C(y-x)\right)$ moves out of any bounded
input set because the supervisory function translates the
original input set by an amount $C(y-x)$.  Therefore,
there can not be a linear supervisory function on a bounded
input set.

We now state the general global asymptotic stabilizability
assumption and prove that everywhere divergent linear
systems with  {bounded input set} can not be globally
asymptotically stabilized by any kind of supervisory
function.

\tbf{Notation}: A function $\beta:\mathbb{R}_{\geq
  0}\times\mathbb{R}_{\geq 0}\ra \mathbb{R}_{\geq 0}$ is
called a $\mathcal{KL}_\infty$ function if $\beta(r,.)$ is
increasing function in $r$ such that $\beta(0,.)=0$; and
$\beta(.,t)$ asymptotically tends to zero as $t\ra\infty$.

The following stabilizability assumption, which we call the
global asymptotic stabilizability assumption, was discussed
in~\cite{2008-tabuada-approximate}.
\begin{defn}\label{defn:sa1} A control system
  with input set $U$ and state space $\mathbb{R}^n$ is
  said to be globally asymptotically stabilizable if there
  exists a supervisory function
  $k:\mathbb{R}^n\times\mathbb{R}^n\times U\ra U$ enforcing
  the following estimate for all $x,y\in\mathbb{R}^n$,
  $\tbf{u}\in\mathcal{U}$ and $t\in\mathbb{R}_0^{+}$
\begin{equation}\label{eqn:asymptotic}
||\tbf{x}(t,x,\tbf{u})-\tbf{y}(t,y,k(\tbf{y},\tbf{x},\tbf{u}))||\leq\beta(||x-y||,t).
\end{equation}
where $\beta$ is a $\mc{KL}_\infty$ function.
\end{defn}

A linear system $\Sigma=\left<A_{n\times n},B_{n\times
    m},U,\mathcal{U}\right>$ is everywhere divergent if all
the eigenvalues of $A$ have positive real part.  Consider
$\lambda_{min}$ as the eigenvalue of $A$ with minimum real
part.  Consider $\Sigma$ as everywhere divergent and so
$Re(\lambda_{min})$ is positive.  We proceed to demonstrate
that for the everywhere divergent linear system $\Sigma$, if
the input set $U$ is bounded such that $||u||<M~\forall
u\in U$, then any two state trajectories starting at a
distance greater than $\frac{4||B||M}{Re(\lambda_{min})}$
can never come arbitrarily close, irrespective of what pair
of input trajectories drives the two state trajectories.
This would mean that the system can not be globally
asymptotically stabilized.  The proof is as follows.

We know for a linear system 
\begin{equation}\label{eqn:linear}
\begin{split}
 \tbf{x}(x,\tau,\tbf{u})-\tbf{y}(y,\tau,\tbf{v})=\exp(A\tau)(x-y)+\int_0^\tau\exp(A(\tau-t))B(\tbf{u}-\tbf{v})(t)dt
\end{split}
\end{equation}

By using reverse triangular inequality on (\ref{eqn:linear})
we get 
\begin{equation}\label{eqn:triangular}
\begin{split}
&\left|\left|\tbf{x}(x,\tau,\tbf{u})-\tbf{y}(y,\tau,\tbf{v})\right|\right|\\
&\geq \lel|\exp(A\tau)(x-y)\rer|-\lel|\int_0^\tau\exp(A(\tau-t)B(\tbf{u}-\tbf{v})(t)\rer|
\end{split}
\end{equation}

For all possible pairs of trajectories $\tbf{u}$ and
$\tbf{v}$, we have that $\lel|\tbf{u}-\tbf{v}\rer|\leq 2M$
since the norm of inputs is bounded by $M$.

Then choose $x,y$ such that $||x-y||>\frac{4M||B||}{Re(\lambda_{min})}$.
Putting these bounds in (\ref{eqn:triangular}) we get
\begin{equation}\label{eqn:bound}
\begin{split}
&\left|\left|\tbf{x}(x,\tau,\tbf{u})-\tbf{y}(y,\tau,\tbf{v})\right|\right|\\
&\geq 2M||B||\lel|\exp(A\tau)\left(\frac{2}{Re(\lambda_{min})}-\int_0^\tau\exp(-At)dt\right)\rer|
\end{split}
\end{equation}

Again using reverse triangular inequality we get 
\begin{equation}\label{eqn:bound1}
\begin{split}
&\left|\left|\tbf{x}(x,\tau,\tbf{u})-\tbf{y}(y,\tau,\tbf{v})\right|\right|\\
&\geq 2M||B||||\exp(A\tau)||\left(\frac{2}{Re(\lambda_{min})}-\lel|\int_0^\tau\exp(-At)dt\rer|\right)
\end{split}
\end{equation}

Since $\lel|\int_0^\tau\exp(-At)dt\rer|\leq
\int_0^\tau||\exp(-At)||dt$ So
$$\left(\frac{2}{Re(\lambda_{min})}-\lel|\int_0^\tau\exp(-At)dt\right|\rer)\geq
\left(\frac{2}{Re(\lambda_{min})}-\int_0^\tau||\exp(-At)||dt\right).$$  

Furthermore, we have that $||\exp(-At)||\leq
\exp(-Re(\lambda_{min})t)$ since $\lambda_{min}$ is the
eigenvalue with minimum real part.  Substituting we get
$$\left(\frac{2}{Re(\lambda_{min})}-\lel|\int_0^\tau\exp(-At)dt\right|\rer)\geq\left(\frac{2}{Re(\lambda_{min})}-\int_0^\tau\exp(-Re(\lambda_{min})t)dt\right)$$
$$\geq
\left(\frac{2}{Re(\lambda_{min})}-\frac{1}{Re(\lambda_{min})}\right)~~~\text{since}
~\int_0^\infty\exp(-Re(\lambda_{min})t)dt=
1/Re(\lambda_{min}).$$

Substituting in (\ref{eqn:bound1}) we get
\begin{equation}
\begin{split}
  &\left|\left|\tbf{x}(x,t,\tbf{u})-\tbf{y}(y,t,\tbf{v})\right|\right|
  \geq
  \frac{2M||B||||\exp(A\tau)||}{Re(\lambda_{min})}\\
    \end{split}
\end{equation}

$\frac{2M||B||||\exp(A\tau)||}{Re(\lambda_{min})}$ keeps
increasing in $\tau$ because $A$ has all eigenvalues with
positive real part.  

This means that, the linear system being \emph{everywhere
  divergent} with {norm of inputs upper bounded by} $M>0$,
if the starting points $x$ and $y$ are such that
$||x-y||>\frac{4M||B||}{Re(\lambda_{min})}$ where
${Re(\lambda_{min})}$ is the eigenvalue with minimum real
part (which is positive), then $\tbf{x}$ and $\tbf{y}$ can
not come arbitrarily close for any possible input
trajectories $\tbf{u}$ and $\tbf{v}$ inside the bounded
input set driving $\tbf{x}$ and $\tbf{y}$ respectively.
This completes the proof.

Hence an everywhere divergent linear system whose {input set
  is bounded} can not be globally asymptotically stabilized.
But such a system may still be locally asymptotically
stabilizable and an example was shown in
Section~\ref{sec:finite-abstraction}.  However if the input
space has no boundaries, then global asymptotic
stabilizability of linear systems is equivalent to local
asymptotic stabilizability.

\bibliographystyle{plain}


\end{document}